\newtheorem{definition}{Definition}
\newtheorem{proposition}[definition]{Proposition}
\newtheorem{lemma}[definition]{Lemma}
\newtheorem{theorem}[definition]{Theorem}
\newtheorem{corollary}[definition]{Corollary}
\def\squareforqed{\hbox{\rlap{$\sqcap$}$\sqcup$}}
\def\qed{\ifmmode\squareforqed\else{\unskip\nobreak\hfil
\penalty50\hskip1em\null\nobreak\hfil\squareforqed
\parfillskip=0pt\finalhyphendemerits=0\endgraf}\fi}
\def\endenv{\ifmmode\;\else{\unskip\nobreak\hfil
\penalty50\hskip1em\null\nobreak\hfil\;
\parfillskip=0pt\finalhyphendemerits=0\endgraf}\fi}
\newenvironment{proof}{\noindent \textbf{{Proof~} }}{\qed}
\newenvironment{remark}{\noindent \textbf{{Remark~}}}{}
\mathchardef\ordinarycolon\mathcode`\:
\def\vcentcolon{\mathrel{\mathop\ordinarycolon}}
\newcommand{\nc}{\newcommand}
\nc{\rnc}{\renewcommand}
\nc{\beq}{\begin{equation}}
\nc{\eeq}{{\end{equation}}}
\nc{\beqa}{\begin{eqnarray}}
\nc{\eeqa}{\end{eqnarray}}
\nc{\lbar}[1]{\overline{#1}}
\nc{\bra}[1]{\langle#1|}
\nc{\ket}[1]{|#1\rangle}
\nc{\ketbra}[2]{|#1\rangle\!\langle#2|}
\nc{\braket}[2]{\langle#1|#2\rangle}
\nc{\proj}[1]{| #1\rangle\!\langle #1 |}
\nc{\avg}[1]{\langle#1\rangle}
\nc{\Rank}{\operatorname{Rank}}
\nc{\smfrac}[2]{\mbox{$\frac{#1}{#2}$}}
\nc{\tr}{\operatorname{Tr}}
\nc{\ox}{\otimes}
\nc{\dg}{\dagger}
\nc{\dn}{\downarrow}
\nc{\cA}{{\cal A}}
\nc{\cB}{{\cal B}}
\nc{\cC}{{\cal C}}
\nc{\cD}{{\cal D}}
\nc{\cE}{{\cal E}}
\nc{\cF}{{\cal F}}
\nc{\cG}{{\cal G}}
\nc{\cH}{{\cal H}}
\nc{\cI}{{\cal I}}
\nc{\cJ}{{\cal J}}
\nc{\cK}{{\cal K}}
\nc{\cL}{{\cal L}}
\nc{\cM}{{\cal M}}
\nc{\cN}{{\cal N}}
\nc{\cO}{{\cal O}}
\nc{\cP}{{\cal P}}
\nc{\cR}{{\cal R}}
\nc{\cS}{{\cal S}}
\nc{\cT}{{\cal T}}
\nc{\cX}{{\cal X}}
\nc{\cZ}{{\cal Z}}
\nc{\csupp}{{\operatorname{csupp}}}
\nc{\qsupp}{{\operatorname{qsupp}}}
\nc{\var}{{\operatorname{var}}}
\nc{\rar}{\rightarrow}
\nc{\lrar}{\longrightarrow}
\nc{\polylog}{{\operatorname{polylog}}}
\nc{\wt}{{\operatorname{wt}}}
\nc{\av}[1]{{\left\langle {#1} \right\rangle}}
\nc{\RR}{{{\mathbb R}}}
\nc{\CC}{{{\mathbb C}}}
\nc{\FF}{{{\mathbb F}}}
\nc{\NN}{{{\mathbb N}}}
\nc{\ZZ}{{{\mathbb Z}}}
\nc{\PP}{{{\mathbb P}}}
\nc{\QQ}{{{\mathbb Q}}}
\nc{\UU}{{{\mathbb U}}}
\nc{\EE}{{{\mathbb E}}}
\nc{\id}{{\operatorname{id}}}
\nc{\CHSH}{{\operatorname{CHSH}}}
\newcommand{\cvartheta}{{\widetilde\vartheta}}
\newcommand{\calpha}{{\widetilde\alpha}}
\newcommand{\Galpha}{{\widehat\alpha}}
\newcommand{\cplus}{{\,\boxplus\,}}
\nc{\be}{\begin{equation}}
\nc{\ee}{{\end{equation}}}
\nc{\bea}{\begin{eqnarray}}
\nc{\eea}{\end{eqnarray}}
\nc{\Hom}[2]{\mbox{Hom}(\CC^{#1},\CC^{#2})}
\nc{\rU}{\mbox{U}}
\nc{\ob}[1]{#1}
\begin{document}

\title{Zero-error communication via quantum channels, \protect\\
       non-commutative graphs and a quantum Lov\'{a}sz $\mathbf{\vartheta}$ function}

\author{Runyao Duan}
\affiliation{Centre for Quantum Computation and Intelligent Systems (QCIS), Faculty of Engineering \protect\\ and Information Technology, University of Technology, Sydney NSW2007, Australia}
\affiliation{State Key Laboratory of Intelligent Technology and Systems, Tsinghua National Laboratory\protect\\ for Information Science and Technology, Department of Computer Science and Technology,\protect\\ Tsinghua University, Beijing 100084, China}

\author{Simone Severini}
\affiliation{Department of Physics and Astronomy, University College London, WC1E 6BT London, U.K.}

\author{Andreas Winter}
\affiliation{Department of Mathematics, University of Bristol, Bristol BS8 1TW, U.K.}
\affiliation{Centre for Quantum Technologies, National University of Singapore, 2 Science Drive 3, Singapore 117542}
\email{a.j.winter@bris.ac.uk}

\date{11 March 2010}

\begin{abstract}
We study the quantum channel version of Shannon's zero-error 
capacity problem. Motivated by recent progress on this question,
we propose to consider a certain operator space as the 
quantum generalisation of the adjacency matrix, in terms of
which the plain, quantum and entanglement-assisted capacity can be
formulated, and for which we show some new basic properties.

Most importantly, we define a quantum version of Lov\'{a}sz' famous
$\vartheta$ function, as the norm-completion (or stabilisation)
of a ``naive'' generalisation of $\vartheta$. We go on to show that this
function upper bounds the number of entanglement-assisted
zero-error messages, that it is given by a semidefinite 
programme, whose dual we write down explicitly, and that
it is multiplicative with respect to the natural (strong) graph product.

We explore various other properties of the new quantity, which
reduces to Lov\'{a}sz' original $\vartheta$ in the classical case,
give several applications, and propose to study the
operator spaces associated to channels as ``non-commutative graphs'',
using the language of Hilbert modules.
\end{abstract}

\maketitle

\section{Classical channels, graphs and zero-error communication}
\label{sec:c-intro}
For a classical channel $N:X \rar Y$ between discrete alphabets
$X$ and $Y$ (in the following assumed to be finite), i.e.~a 
probability transition function $N(y|x)$, Shannon~\cite{Shannon56}
initiated the study of zero-error capacities, i.e.~of transmitting
messages by one and asymptotically many uses of the channel.

To transmit messages through this channel with no probability
of confusion, different messages $m$ need to be associated to different
input symbols $x$ in such a way that the output distributions
$N(\cdot|x)$ have disjoint supports. This motivates the 
introduction of the \emph{confusability graph} $G$ of $N$,
that has the vertex set $X$ and an edge $x \sim x'$ whenever $x$ and $x'$ 
can be confused via the channel, i.e.~if there exists
$y \in Y$ such that $N(y|x)N(y|x') \neq 0$. Clearly then, a code
has to consist of an independent set (also known as stable set,
or anti-clique) $X_0 \subset X$,
i.e.~a set of vertices without edges between them. The maximum
size $|X_0|$ of an independent set in $G$ is called the independence
number $\alpha(G)$, and by the preceding discussion it is the
maximum number of messages that can be transmitted through the
channel without the possibility of confusing them.

Using two channels $N_1$ and $N_2$ in parallel means really
that we have a product channel
\[
  N_1 \times N_2: X_1 \times X_2 \rightarrow Y_1 \times Y_2,
  \quad \text{with} \quad
  (N_1 \times N_2)(y_1y_2|x_1x_2) = N_1(y_1|x_1) N_2(y_2|x_2).
\]
If the channels have confusability graphs $G_1$ and $G_2$,
respectively, the confusability graph of the product channel
is the (strong) graph product $G_1 \times G_2$ which
has vertex set $X_1 \times X_2$ and edges
\[
  x_1x_2 \sim x_1'x_2'  \quad \text{iff} \quad 
  \begin{cases}
    \text{ either } x_1 \sim x_1' \text{ and } x_2 \sim x_2', \\
    \text{ or }     x_1 \sim x_1' \text{ and } x_2=x_2', \\
    \text{ or }     x_1 = x_1' \text{ and } x_2 \sim x_2'.
  \end{cases}
\]
(If this looks complicated, it does so because it has to 
encapsulate the idea that a symbol can be confused with itself.)
An integer $n$ uses a channel with confusability graph $G$ is thus
described by the $n$-fold graph product $G^n$.
With this we can define the zero-error capacity of the 
graph as
\[
  C_0(G) = \lim_{n\rar\infty} \frac{1}{n}\log\alpha(G^n)
         = \sup_{n} \frac{1}{n}\log\alpha(G^n),
\]
i.e.~the asymptotically largest number of bits transmissible 
with certainty, per channel use (throughout, $\log$ is understood
as the binary logarithm).
Note that in graph theory the convention is prevalent to
call $\Theta(G) := 2^{C_0(G)} = \sup_n \sqrt[n]{\alpha(G^n)}$
the zero-error capacity, but in this paper we prefer to stay
in keeping with the information theoretic usage.

For some graphs, $C_0(G) = \log \alpha(G)$, but in general the
zero-error capacity is larger -- a well-known example is the
pentagon $C_5$ whose capacity is $\frac{1}{2}\log 5$~\cite{Lovasz}, and 
there are graphs such that for every finite $n$,
$\frac{1}{n}\log\alpha(G^n) < C_0(G)$~\cite{guo}.
Finding $\alpha(G)$ (and a maximal-size independent set)
is in general an NP-hard problem, and the calculation of
the zero-error capacity is even worse as it is not even known whether 
$C_0(G)$ is computable.

It should be noted that Shannon~\cite{Shannon56} also considered
(and solved) the problem of zero-error transmission via many
realisations of $N$ in the presence of instantaneous (passive
and noiseless) feedback. In that case, it is not sufficient to
look at the confusability graph $G$ of the channel, but
rather at the full bipartite graph that represents the
possible input-output transitions. The capacity
$C_{0\text{F}}(N)$ in that case is either $0$, if $C_0(N)=0$,
or given by the logarithm of a
linear programming relaxation of the independence number,
called \emph{fractional packing number}.
Note that $C_{0NS}(N)$, the zero-error capacity in the presence of arbitrary 
non-signalling correlations~\cite{barrett} has the same property,
and in fact is always the logarithm of the fractional packing number~\cite{CLMW}.

A much better upper bound on $\alpha(G)$ was given by 
Lov\'{a}sz~\cite{Lovasz} as a semidefinite programming relaxation,
and called $\vartheta(G)$: rephrasing slightly~\cite[Thms.~5 and 6]{Lovasz},
\begin{equation}
  \label{eq:theta-original}  
  \vartheta(G) = \max \bigl\{ \| \1 + T \| : T_{xx'}=0 \text{ if }x=x\text{ or }x\sim x',
                                             \text{ and } \1+T \geq 0 \bigr\},
\end{equation}
where the maximum is over $|X| \times |X|$ complex
(Hermitian) matrices $T$, though one can show
that it is sufficient to consider \emph{real symmetric} $A$
in the above formula.
In fact, via an expression of $\vartheta$ as
the solution to a semidefinite programme, it can also
be shown to be multiplicative with respect to the graph product 
(i.e.~$\vartheta(G\times H) = \vartheta(G)\vartheta(H)$). Thus,
it also gives an upper bound $C_0(G) \leq \log\vartheta(G)$
on the zero-error capacity.
Apart from some special graphs exhibited by Haemers~\cite{Haemers}
and a particular construction by Alon~\cite{Alon}, it remains
the best upper bound on the zero-error capacity, and has been
deeply studied ever since it appeared~\cite{Knuth}.

\medskip
In the rest of the paper we will extend this theory to quantum
channels and structures generalising the confusability graph 
(see section~\ref{sec:q}).
Instead of introducing only the mathematical objects, we shall
precede each definition by a motivating discussion of the 
zero-error information theory; for instance in section~\ref{sec:0} 
we will introduce zero-error codes for channels to motivate our
definitions of quantum independence numbers (there are at
least three meaningful ones). Then in section~\ref{sec:theta},
we introduce the quantum $\vartheta$ function, explore some of
its properties, of which the most important one is the
semidefinite formulation (section~\ref{sec:sdp}). We end with
highlighting several applications (section~\ref{sec:end}), 
and discussing future directions with non-commutative 
graphs, in section~\ref{sec:q-graphs}, where we propose an
algebraic framework for them.

\section{Quantum channels and non-commutative graphs}
\label{sec:q}
To describe the quantum generalisations of the above combinatorial
concepts, we start with quantum communication channels, mapping
quantum states to quantum states. The input
and output alphabets of a channel are replaced by (complex)
Hilbert spaces $A$ and $B$ -- in the present paper of finite
dimension $|A|$ and $|B|$ -- with their spaces of linear
operators $\mathcal{L}(A)$, etc.
The Hermitian (self-adjoint) operators $\mathcal{L}(A)_{\text{sa}}$
are the physical observables on $A$, while
the states are the \emph{density operators} $\rho \in \mathcal{S}(A) \subset \mathcal{L}(A)$,
i.e.~$\rho \geq 0$ and $\tr\rho = 1$. Note that the set of states is
a convex body whose extreme points are exactly the one-dimensional
projectors $\proj{\psi}$ onto one-dimensional subspaces $\CC\ket{\psi}$
with a unit vector $\ket{\psi} \in A$.
[We use Dirac notation throughout: $\ket{\psi} \in A$ is a vector,
$\bra{\psi}$ it's adjoint (a linear form), $\bra{\varphi} \psi \rangle$
denotes the Hilbert space inner product (linear in the second argument),
and $\ketbra{\psi}{\varphi}$ is the corresponding outer product,
a rank one operator in $\mathcal{L}(A)$, etc.]
A quantum channel is now a linear map
$\mathcal{N}: \mathcal{L}(A) \rightarrow \mathcal{L}(B)$
that is additionally \emph{completely positive and trace preserving
(cptp)}. The latter means that $\tr \mathcal{N}(\rho) = \tr\rho$;
the former means that not only $\mathcal{N}$ maps positive semidefinite
operators into positve semidefinite operators (being a ``positive''
map, for short), but also all extensions $\mathcal{N}\ox\id_{R}$
for an arbitrary Hilbert space $R$. The class of completely
positive maps is the largest subset of positive maps containing
the identity and stable under tensor products~\cite{Effros-book}.

Cptp maps between Hilbert space operator algebras have several
useful representations with associated physical interpretation. One of 
them is the \emph{Kraus form} $\mathcal{N}(\rho) = \sum_j E_j \rho E_j^\dagger$
with Kraus operators $E_j:A \rightarrow B$, which can be read as
the state change under a generalised measurement with ``events'' $j$.
Every such form defines a completely positive map, and it is trace
preserving iff $\sum_j E_j^\dagger E_j = \1$.

Classical channels are embedded into this picture as follows:
starting from the sample space, e.g.~the inputs $X$ to a channel,
we consider the Hilbert space $\CC X$, spanned by the
orthonormal basis $\{ \ket{x} \}_{x\in X}$. The probability
simplex is mapped to the convex hull of the pure basis states
$\proj{x}$, so that we focus only on density operators diagonal
in the computational basis.
A classical channel
$N:X\rar Y$ has to be translated into a cptp map between the
diagonal matrices over $A = \CC X$ and $B=\CC Y$, which is
done canonically by constructing it from the Kraus operators
$\sqrt{N(y|x)}\ketbra{y}{x}$, $x\in X$, $y\in Y$. I.e., for each
classical probabilistic transition $x \rar y$ there is an
event in the quantum cptp map.

For the channel $\mathcal{N}:\mathcal{L}(A) \rightarrow \mathcal{L}(B)$,
with Kraus operators $E_j:A \rightarrow B$, we now define the
\emph{non-commutative (confusability) graph} as the operator subspace
\begin{equation}
  \label{eq:q-graph}
  S := \operatorname{span} \bigl\{ E_j^\dagger E_k : j,k \bigr\} < \mathcal{L}(A).
\end{equation}
In~\cite{CCH-0,Duan-0} it is shown that a subspace $S$ is associated
in the above way to a channel iff $\1 \in S$ and $S=S^\dagger$.
That is why we shall call an operator space $S < \mathcal{L}(A)$
with these properties a \emph{non-commutative graph}, regarding 
the operator space $S$ as the quantum generalization
of the classical confusability graph $G$. This idea is enforced
by the observation that for two channels $\mathcal{N}_1$
and $\mathcal{N}_2$, with associated subspaces $S_1$ and $S_2$,
respectively, the tensor product channel 
$\mathcal{N}_1 \ox \mathcal{N}_2$ has operator subspace
$S_1 \ox S_2$. We shall come back to this notion, with a 
proper (and more subtle) definition, in the last section~\ref{sec:q-graphs}.

Again, let us review this concept in the classical case: as we have
seen, the Kraus operators may be chosen as $E_{xy} = \sqrt{N(y|x)}\ketbra{y}{x}$,
meaning that
\[
  E_{x'y'}^\dagger E_{xy} = \sqrt{N(y'|x')N(y|x)} \bra{y'} y \rangle   \ketbra{x'}{x}
\]
is nonzero iff $y=y'$ and $N(y|x')N(y|x) \neq 0$. Thus, 
\[
  S = \bigl\{ T : \forall x \not\sim x'\  \bra{x} T \ket{x'} = 0 \bigr\},
\]
which means that from the patterns of zeros in the $|X| \times |X|$-matrix
representation of the admissible $T$ we can read off the graph
complement $\overline{G}$ of the confusability graph $G$. 
Note that an operator
space such as this is always a non-commutative graph, and that there is
always a classical channel $N$ giving rise to $S$: simply choose
as the output alphabet $Y$ the set of edges of $G$, and $N$
maps an input symbol to a random edge incident with it.

Coming back to the general case:
An alternative way of thinking about the state change due to
the channel $\mathcal{N}$ is to view it as a pulling-back
of observables on $B$ to observables on $A$: the linear map
effecting this translation is the adjoint
$\mathcal{N}^*(X) = \sum_j E_j^\dagger X E_j$ (in physics
often called the ``Heisenberg picture'', in contrast to the
``Schr\"odinger picture'' $\mathcal{N}$), and indeed
one may think of the channel $\mathcal{N}$ as allowing
the receiver to make (generally distorted) measurements on $A$.
The adjoint map is characterised by being completely
positive and \emph{unital}, i.e. $\mathcal{N}^*(\1) = \1$.

Every channel has a Stinespring dilation, representing the
dynamics as an isometry followed by a partial trace: i.e., there
exists $V:A \hookrightarrow B \ox C$ such that
\[
  \mathcal{N}(\rho) = \tr_C V\rho V^\dagger,
\]
and up to isometric equivalence, $C$ (the ``environment'')
and $V$ are unique. 
Then one has a unique \emph{complementary channel}
\[
  \widehat{\mathcal{N}}(\rho) = \tr_B V\rho V^\dagger,
\]
representing the information loss of the original channel
to the environment. Note that the adjoint maps of $\mathcal{N}$
and $\widehat{\mathcal{N}}$ can be written compactly using the
Stinespring isometry $V$:
\begin{align*}
  \mathcal{N}^*(X)           &= V^\dagger (X \ox \1) V, \\
  \widehat{\mathcal{N}}^*(Y) &= V^\dagger (\1 \ox Y) V.
\end{align*}
recalling that $V^\dagger$ is a projection.

\begin{lemma}
\label{lemma:S-char}
For any channel $\mathcal{N}$ with complementary channel
$\widehat{\mathcal{N}}$,
\(
  S = \widehat{\mathcal{N}}^*\bigl( \mathcal{L}(C) \bigr).
\)
In words: $S$ is the space of operators on $A$ measurable 
by the channel environment.
\end{lemma}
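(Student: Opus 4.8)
The plan is to compute the image of $\widehat{\mathcal{N}}^*$ explicitly by fixing one convenient Stinespring dilation and reading off the associated Kraus operators. First I would choose an orthonormal basis $\{\ket{j}\}_j$ of the environment $C$ and write the isometry as $V = \sum_j E_j \ox \ket{j}$; one checks at once that $\tr_C V\rho V^\dagger = \sum_j E_j \rho E_j^\dagger = \mathcal{N}(\rho)$, so the $E_j$ are exactly a Kraus family for $\mathcal{N}$. Since $S = \operatorname{span}\{E_j^\dagger E_k\}$ does not depend on the chosen Kraus decomposition, it is harmless to work with this particular representation.

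The second step is the key computation. Using the formula $\widehat{\mathcal{N}}^*(Y) = V^\dagger(\1 \ox Y)V$ recorded in the excerpt together with $V^\dagger = \sum_k E_k^\dagger \ox \bra{k}$, I would expand
\[
  \widehat{\mathcal{N}}^*(Y) = \sum_{j,k} \bra{k}Y\ket{j}\, E_k^\dagger E_j,
\]
so the output is a linear combination of the products $E_k^\dagger E_j$ whose coefficients are precisely the matrix entries of $Y$ in the basis $\{\ket{j}\}$. The inclusion $\widehat{\mathcal{N}}^*\bigl(\mathcal{L}(C)\bigr) \subseteq S$ is immediate from this display, and for the reverse inclusion I would simply feed in the matrix units: taking $Y = \ketbra{k}{j}$ gives $\widehat{\mathcal{N}}^*(\ketbra{k}{j}) = E_k^\dagger E_j$, and as $j,k$ range over the basis these outputs are exactly the spanning set of $S$. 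Hence the image of $\widehat{\mathcal{N}}^*$ equals $S$.

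There is no real obstacle here beyond bookkeeping; the only point requiring a word of justification is that the definition of $S$ is insensitive to the choice of Kraus operators, which legitimises reading the $E_j$ off from $V$. Two Kraus families are related by an isometry mixing the indices, $E_j = \sum_i u_{ji} F_i$, whence $E_j^\dagger E_k = \sum_{i,l} \overline{u_{ji}}\, u_{kl}\, F_i^\dagger F_l$, so the two spans coincide. Everything else is a one-line index manipulation, with care taken only over the placement of the daggers.
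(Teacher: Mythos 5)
Your proposal is correct and follows essentially the same route as the paper: write the Stinespring isometry as $V = \sum_j E_j \ox \ket{j}$, expand $\widehat{\mathcal{N}}^*(Y) = V^\dagger(\1\ox Y)V = \sum_{j,k}\bra{k}Y\ket{j}\,E_k^\dagger E_j$, and observe that the matrix units exhaust the span $S$. Your extra remark on the unitary freedom between Kraus families is a harmless addition (the paper instead notes the basis-independence of $S$ as a consequence of the lemma rather than as a prerequisite), and the computation itself is identical.
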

\begin{proof}
We can write a Stinespring dilation of $\mathcal{N}$
via the injective $V:A \hookrightarrow B \otimes C$,
\[
  V\ket{\varphi} = \sum_{j} (E_j\ket{\varphi})^B \otimes \ket{j}^C,
\]
so that for an arbitrary operator $X \in \mathcal{L}(C)$ the
Heisenberg map of the complementary channel reads
\[
  \widehat{\mathcal{N}}^*(X) = V^\dagger(\1 \otimes X)V
                             = \sum_{j,k} E_j^\dagger E_k \bra{j} X \ket{k}.
\]
Now, since the operators $\ketbra{k}{j}$ form a basis of $\mathcal{L}(C)$,
\[
  \widehat{\mathcal{N}}^*\bigl( \mathcal{L}(C) \bigr)
     = \operatorname{span} \bigl\{ E_j^\dagger E_k : j,k \bigr\} = S,
\]
i.e., the image of $\widehat{\mathcal{N}}^*$ is indeed $S$.
\end{proof}

\medskip
Note that this lemma also shows that our definition of $S$ was sound:
it doesn't depend on the particular choice of Kraus operators,
and can be entirely understood in terms of the channel map
(or rather its complement).

\medskip
\begin{remark}
  In general, $S$ does not uniquely define the channel $\mathcal{N}$ from which
  it originates. Already classical graphs and channels
  show this, as the confusability graphs records only which pairs
  of inputs can lead to the same output with the same probability,
  but it doesn't remember the value of this probability,
  nor can it tell us about the triples of inputs which can end
  up at the same output (note that even if there is a triangle
  in $G$, there may not be a single output symbol which can be
  reached by all of its vertices).
\end{remark}

\medskip
Returning to the channel motivation, we can ask what happens to
a non-commutative graph $S = \widehat{\mathcal{N}}^*(\mathcal{L}(C))$
if we add post-processing or pre-processing to the channel
$\mathcal{N}:\mathcal{L}(A) \rar \mathcal{L}(B)$. 
I.e., considering
channels $\mathcal{R}:\mathcal{L}(B) \rar \mathcal{L}(B')$
and $\mathcal{T}:\mathcal{L}(A') \rar \mathcal{L}(A)$, let us look
at the non-commutative graphs $\widehat{S} < \mathcal{L}(A)$ and $S' < \mathcal{L}(A')$
belonging to the compositions
$\mathcal{R}\!\circ\!\mathcal{N}$ and $\mathcal{N}\!\circ\!\mathcal{T}$,
respectively.

Regarding the former, looking at the definition eq.~(\ref{eq:q-graph})
shows that $S < \widehat{S}$, which is the natural relation
of $S$ being a subgraph of $\widehat{S}$.
Regarding the latter, fix a Stinespring isometry $U:A' \hookrightarrow A \ox D$,
and observe that
\[
  S' = \widehat{\mathcal{N}\!\!\circ\!\mathcal{T}}^*\bigl( \mathcal{L}(C\ox D) \bigr)
     = U^\dagger \bigl( S \ox \mathcal{L}(D) \bigr) U.
\]
The projection $U^\dagger: A\ox D \rar A'$ can be understood as giving rise
to an \emph{induced} subgraph (much as a subset of the vertices of
a classical graph would). Note that in this way, every non-commutative
graph $S$ is an induced subgraph of the product $\1_B \ox \mathcal{L}(C)$
of an empty and a complete graph, by virtue of the 
Stinespring dilation $V$ of an appropriate channel $\mathcal{N}$.
We come back to the issue of (induced) subgraphs again in
section~\ref{sec:q-graphs}.

\section{Zero-error communication with and without entanglement}
\label{sec:0}
Zero-error information transmission via general quantum channels was considered
first by Medeiros \emph{et al.}~\cite{Med}, and then by
Beigi and Shor~\cite{BeigiShor} (in those investigations,
communication signals were, implicitly or explicitly, restricted
to product states across multiple channel uses); 
more recently in full generality
by Cubitt, Chen and Harrow~\cite{CCH-0}, Duan~\cite{Duan-0} and 
Cubitt and Smith~\cite{superduper}; 
Duan and Shi~\cite{DuanShi} present results on multi-user quantum
zero-error capacity,
while quantum effects for classical channels were discovered by
Cubitt, Leung, Matthews and Winter~\cite{CLMW}.

Let $\mathcal{N}: \mathcal{L}(A) \rightarrow \mathcal{L}(B)$
be a quantum channel, i.e.~a linear c.p.t.p.~map, with Kraus operators
$E_j:A \rightarrow B$, so that $\mathcal{N}(\rho) = \sum_j E_j \rho E_j^\dagger$.
Then to send messages $m$ one has to associate them with states
$\rho$ such that different states $\rho$, $\sigma$ 
lead to orthogonal channel output states: 
$\mathcal{N}(\rho) \perp \mathcal{N}(\sigma)$, because it is precisely the
orthogonal states that can be distinguished with certainty.
Clearly, these states may, w.l.o.g., be taken as pure, as the
orthogonality is preserved when going to any states in the support
(i.e., the range) of $\rho$, $\sigma$, etc.

Now we make the elementary observation, made in previous work,
that two input pure states $\varphi = \proj{\varphi}$ and $\psi = \proj{\psi}$
for unit vectors $\ket{\varphi}, \ket{\psi} \in A$,
lead to orthogonal output states $\mathcal{N}(\varphi)$
and $\mathcal{N}(\psi)$ iff 
\[
  0 = \tr\mathcal{N}(\varphi)\mathcal{N}(\psi) 
    = \sum_{jk} \bigl| \bra{\varphi} E_j^\dagger E_k \ket{\psi} \bigr|^2,
\]
which says that for all $j,\ k$,
$\bra{\varphi} E_j^\dagger E_k \ket{\psi} = 0$. 
In other words,
\[
  \ketbra{\varphi}{\psi} \perp S = \operatorname{span} \bigl\{ E_j^\dagger E_k : j,k \bigr\},
\]
the non-commutative confusability graph of the channel $\mathcal{N}$, where
the orthogonality is with respect to the 
Hilbert-Schmidt inner product $\tr A^\dagger B$ of operators.

From the above formula it is clear that the maximum number
$\alpha(\mathcal{N})$ of one-shot zero-error distinguishable
messages down the channel is given as the maximum size
of a set of (orthogonal) vectors $\{ \ket{\phi_m} : m=1,\ldots,N \}$
such that
\begin{equation}
  \label{eq:quantum-alpha-characterize}
  \forall m\neq m' \quad \ketbra{\phi_m}{\phi_{m'}} \in S^\perp.
\end{equation}
Since it is only a property of $S$, we shall denote
$\alpha(\mathcal{N})$ also as $\alpha(S)$, and we call it
the \emph{independence number} of $S$. Note that the
defining property of the operators $\ketbra{\phi_m}{\phi_{m'}}$
in eq.~(\ref{eq:quantum-alpha-characterize}) is that they are
rank-one and an orthonormal system orthogonal to $S$,
with respect to the Hilbert-Schmidt inner product.
[In~\cite{BeigiShor} it was proved that computing the
independence number $\alpha(S)$ is QMA-complete, much like
$\alpha(G)$ is known to be NP-complete for graphs.]

There are at least two further reasonable notions of independence number
possible for quantum channels and their confusability graphs.
They are motivated by entanglement-assisted zero-error 
communication, and by the zero-error transmission of quantum
information.

First, to transmit quantum information, one needs a subspace $A'$
of $A$ with projection operator $P$ such that $PSP = \CC P$ -- this
is exactly the Knill-Laflamme error correction
condition~\cite{KnillLaflamme}. For the channel this is precisely the
necessary and sufficient condition for the existence of a decoding cptp map 
$\mathcal{D}:\mathcal{L}(B) \rar \mathcal{L}(A')$ such that the composition
\[
  \mathcal{L}(A') \hookrightarrow \mathcal{L}(A)
                  \stackrel{\mathcal{N}}{\longrightarrow} \mathcal{L}(B)
                  \stackrel{\mathcal{D}}{\longrightarrow} \mathcal{L}(A')
\]
is the identity map.
Let $\alpha_q(S)$ be the largest dimension of such a quantum
error correcting code $A'$, which we call the \emph{quantum
independence number}.

Finally, $\calpha(S)$ is defined to be the largest integer $N$ such that
there exist Hilbert spaces $A_0$ and $B_0$,
a state $\omega \in \mathcal{S}(A_0 \ox B_0)$ and cptp maps
$\mathcal{E}_m:\mathcal{L}(A_0) \rightarrow \mathcal{L}(A)$ 
($m=1,\ldots,N$) such that the $N$ states
$\rho_m = (\mathcal{N}\circ\mathcal{E}_m \ox \id_{B_0})\omega$ are pairwise
orthogonal. This definition of the \emph{entanglement-assisted
independence number} is motivated by the scenario where sender
and receiver share the state $\omega$ beforehand, and the sender
uses the encoding maps $\mathcal{E}_m$ to modulate the state 
before sending her share into the channel. The receiver has to be
able to recover the message from his final state, $\rho_m$.
As before, we can argue that the shared state is w.l.o.g.~pure,
i.e.~$\omega = \proj{\Omega}$ for a unit vector $\ket{\Omega} \in A_0 \ox B_0$,
either by picking $\ket{\Omega}$ from the support of $\omega$,
or by purification. In this way, we can already assume $A_0 \simeq B_0$.
Furthermore, all $\mathcal{E}_m$ have Stinespring dilations
$V_m: A_0 \hookrightarrow A \ox R$ (w.l.o.g.~using the same
extension $R$), so that $\mathcal{E}_m(\rho) = \tr_R V\rho V^\dagger$.
Now it is easily seen that the orthogonality condition on the
$\rho_m$ is equivalent to the the states
\[
  \ket{\phi_m} = (V_m \ox \1)\ket{\Omega} \in A \ox R \ox B_0
\]
forming an independent set for $S \ox \mathcal{L}(R)\mathcal \ox \1_{B_0}$.
We can reformulate this in turn without referring to $B_0$, by noting
that there is a state $\rho \in \mathcal{S}(A \ox R)$ and
unitaries $U_m$ on $A \ox R$ such that
\[
  \tr_{B_0} \ketbra{\phi_m}{\phi_{m'}} = U_m \rho U_{m'}^\dagger,
\]
so that we may regard an entanglement-assisted independent set
as a state $\rho \in \mathcal{S}(A\ox R)$ and a collection of unitaries $U_m$,
such that for all $m \neq m'$, $U_m \rho U_{m'}^\dagger \perp S \ox \mathcal{L}(R)$.

A special case is when the encoding modulation is only unitary,
and the extension system is trivial, $R = \CC$. The largest number
of messages under this additional restriction we denote $\calpha_U(S)$,
and call it the \emph{unitary entanglement-assisted independence number}.

On the other hand, if we lift the restriction that the encoding
maps $\mathcal{E}_m$ have to be trace preserving (but demanding
it for the decoding), we obtain the
\emph{generalised entanglement-assisted independence number}
$\Galpha(S)$: we demand instead that 
$\mathcal{E}_m(\sigma) = \sum_j E_{jm} \sigma E_{jm}^\dagger$
is such that $\sum_j E_{jm}^\dagger E_{jm} \in \text{GL}(A_0)$
is invertible. Since such cp maps still have a Stinespring
dilation, only that $V$ is no longer isometry by invertible,
we arrive at the notion of a \emph{generalised entanglement-assisted
independent set}, consisting as before of a state $\rho \in \mathcal{S}(A \ox R)$
and invertible operators $W_m \in \text{GL}(A\ox R)$, such that
for $m \neq m'$,
$W_m \rho W_{m'}^\dagger \perp S \ox \mathcal{L}(R)$. Note that this
concept even makes sense for \emph{generalised non-commutative
graphs}: $S=S^\dagger$, without the condition that $\1\in S$ but
only assuming that there is a positive definite element $M\in S$.

The following proposition records some elementary properties
of the independence numbers.
\begin{proposition}
  \label{prop:elementary-Watson}
  For all non-commutative graphs $S < \mathcal{L}(A)$,
  \[
    \alpha_q(S) \leq \alpha(S) \leq \calpha_U(S) \leq \calpha(S) \leq \Galpha(S).
  \]
  If $\dim S^\perp < k(k-1)$, then $\alpha(S) < k$; in particular
  $\alpha(S) \leq |A|$.
  Furthermore, $\Galpha(S) \leq 1+\dim S^\perp$, even for generalised
  non-commutative graphs.
  \\
  Finally, all these independence numbers are monotonic
  (non-increasing) under pre- and postprocessing (see section~\ref{sec:q}).
\end{proposition}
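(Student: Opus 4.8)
The plan is to treat the four assertions in turn, spending most effort on the bound $\Galpha(S)\le 1+\dim S^\perp$.

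\emph{The chain of inequalities.} I read it as a sequence of relaxations of the encoding resources. For $\alpha_q(S)\le\alpha(S)$, given a quantum code $A'$ with projector $P$ and $PSP=\CC P$, any orthonormal basis $\{\ket{\phi_m}\}$ of $A'$ satisfies, for $m\neq m'$ and $T\in S$ with $PTP=\lambda_T P$, the identity $\bra{\phi_{m'}}T\ket{\phi_m}=\bra{\phi_{m'}}PTP\ket{\phi_m}=\lambda_T\braket{\phi_{m'}}{\phi_m}=0$, so it is an independent set. For $\alpha(S)\le\calpha_U(S)$, an independent set $\{\ket{\phi_m}\}$ is realised with trivial ancilla $R=\CC$, fixed state $\rho=\proj{\phi_1}$, and unitaries $U_m$ with $U_m\ket{\phi_1}=\ket{\phi_m}$, since then $U_m\rho U_{m'}^\dagger=\ketbra{\phi_m}{\phi_{m'}}\perp S$. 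The remaining inequalities $\calpha_U\le\calpha\le\Galpha$ are inclusions of feasible sets: the unitary case is the case $R=\CC$ of the isometric (trace-preserving) case, and isometric encodings are the special case of invertible completely positive ones. Each step only requires checking that the defining orthogonality survives the specialisation, which is immediate.

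\emph{The dimension bound on $\alpha$.} The key observation is already recorded after eq.~(\ref{eq:quantum-alpha-characterize}): for an independent set $\{\ket{\phi_1},\dots,\ket{\phi_k}\}$ the $k(k-1)$ operators $\ketbra{\phi_m}{\phi_{m'}}$ with $m\neq m'$ lie in $S^\perp$ and are orthonormal in Hilbert--Schmidt inner product, since $\tr(\ketbra{\phi_{m'}}{\phi_m}\ketbra{\phi_n}{\phi_{n'}})=\delta_{mn}\delta_{m'n'}$. Hence $\dim S^\perp\ge k(k-1)$ whenever $\alpha(S)\ge k$, the contrapositive of the claim. For the ``in particular'', $\1\in S$ forces every element of $S^\perp$ to be traceless, so $\dim S^\perp\le|A|^2-1<(|A|+1)|A|$, and taking $k=|A|+1$ gives $\alpha(S)\le|A|$.

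\emph{The bound on $\Galpha$.} This is the part I expect to be hardest. From a generalised independent set --- a state $\rho$ and invertible $W_m$ on $A\ox R$ with $W_m\rho W_{m'}^\dagger\perp S\ox\mathcal{L}(R)$ for $m\neq m'$ --- I would first eliminate $R$ by a partial trace: setting $\tau_{mm'}:=\tr_R(W_m\rho W_{m'}^\dagger)$, the identity $(S\ox\mathcal{L}(R))^\perp=S^\perp\ox\mathcal{L}(R)$ gives $\tau_{mm'}\in S^\perp$ for $m\neq m'$, while writing $Y_m:=W_m\sqrt\rho$ and expanding the partial trace over an orthonormal basis of $R$ exhibits the block matrix $\mathcal{T}:=(\tau_{mm'})_{m,m'}$ as a Gram matrix of the $Y_m$, hence $\mathcal{T}\ge 0$ in $\mathcal{L}(\CC^N\ox A)$. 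Using the positive-definite $M\in S$ --- which is $\1$ for an ordinary non-commutative graph and is exactly what the generalised definition supplies, so the whole argument applies verbatim to generalised graphs --- I would rescale each $W_m$ so that $\tr(M\tau_{mm})=1$; since $M$ annihilates $S^\perp$ and $\tr(M\,\cdot\,)$ is completely positive, this makes the diagonal blocks normalised and the off-diagonal ones traceless against $M$. It then remains to extract $N-1$ linearly independent elements of $S^\perp$ from $\mathcal{T}$, giving $N-1\le\dim S^\perp$. The obvious candidate, the first row $\{\tau_{1m}:m\ge 2\}$, is \emph{not} independent in general: a relation $\sum_{m\ge 2}c_m\tau_{1m}=0$ only forces $\ket{1}\ket{\phi}$ and $(\sum_m\bar c_m\ket m)\ket{\psi}$ to be $\mathcal{T}$-orthogonal, which is compatible with $\mathcal{T}\ge 0$. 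This degeneracy, a partial reducibility of $\mathcal{T}$, is precisely the main obstacle; I would resolve it by induction on $N$, splitting off the $\mathcal{T}$-orthogonal block produced by such a relation and checking that the normalisation and the count of independent $S^\perp$-directions are additive across the split. Tightness (superdense coding with $S=\CC\1$ on a qubit gives $\Galpha=4=1+\dim S^\perp$) is a useful sanity check.

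\emph{Monotonicity.} For post-processing $\mathcal{R}\circ\mathcal{N}$ the graph grows, $S\subseteq\widehat S$ (cf.\ eq.~(\ref{eq:q-graph})), so $\widehat S^\perp\subseteq S^\perp$ and every code for $\widehat S$ is a code for $S$; all five numbers are therefore non-increasing. For pre-processing $\mathcal{N}\circ\mathcal{T}$ the graph is the induced subgraph $S'=U^\dagger(S\ox\mathcal{L}(D))U$, and I would transport a code along the Stinespring isometry $U:A'\hookrightarrow A\ox D$: the adjunction $\langle F,U(\,\cdot\,)U^\dagger\rangle=\langle U^\dagger FU,\,\cdot\,\rangle$ turns orthogonality to $S'$ into orthogonality to $S\ox\mathcal{L}(D)$, yielding a code of the same size for $S\ox\mathcal{L}(D)$. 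For $\calpha$ and $\Galpha$ the factor $\mathcal{L}(D)$ is absorbed into the ancilla $R$ of their definitions, so these numbers are unchanged; for $\alpha$, $\alpha_q$ and $\calpha_U$ a compression back to $A$ shows the count cannot increase. Composing the two cases gives monotonicity under arbitrary pre- and post-processing.
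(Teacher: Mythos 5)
Your handling of the chain of inequalities, of the bound $\dim S^\perp \geq k(k-1)$ for a $k$-element independent set, and of the monotonicity is correct, and is in fact more explicit than the paper, which dismisses the first and last of these as immediate from the definitions; the Hilbert--Schmidt orthogonality count is exactly the paper's argument.

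The bound $\Galpha(S)\leq 1+\dim S^\perp$ is where the proposal fails, and the problem is not the linear-dependence issue you flag at the end but your very first reduction. Passing from the hypothesis $W_m\rho W_{m'}^\dagger\perp S\ox\mathcal{L}(R)$ to the plain partial traces $\tau_{mm'}=\tr_R(W_m\rho W_{m'}^\dagger)\in S^\perp$ discards essential information: the full hypothesis is equivalent to $\tr_R\bigl((\1\ox L)\,W_m\rho W_{m'}^\dagger\bigr)\in S^\perp$ for \emph{every} $L\in\mathcal{L}(R)$, and keeping only $L=\1$ admits spurious data for which the conclusion is false. Concretely, take $S=\mathcal{L}(\CC^2)$ (so $S^\perp=0$ and the claim asserts $N\leq 1$), $R=\CC^2$, $\rho=\frac{1}{4}\1_4$ and $W_m=\1\ox\sigma_m$ with the four Pauli matrices: every off-diagonal $\tau_{mm'}$ vanishes and so lies in $S^\perp$, the block matrix $\mathcal{T}$ is a positive semidefinite Gram matrix, and $\tr(M\tau_{mm})=1$ with $M=\1$, yet $N=4$. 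Hence no argument using only $\mathcal{T}\geq 0$ and $\tau_{mm'}\in S^\perp$ --- in particular no induction on the block structure of $\mathcal{T}$ --- can deliver $N-1\leq\dim S^\perp$: in the degenerate case all off-diagonal blocks are zero and there is nothing to extract. The paper avoids this by never tracing out the ancilla. It writes each encoding with Kraus operators $E_{mj}:A_0\rar A$, keeps the full condition $E_{mj}\rho E_{m'k}^\dagger\in S^\perp$ for all $j,k$, uses the positive definite $M\in S$ to show that the spaces $\mathcal{K}_m\sqrt{\rho}$ (with $\mathcal{K}_m$ the span of the $E_{mj}$) are mutually orthogonal, hence the $\mathcal{K}_m$ linearly independent; then, taking $\mathcal{K}_1$ of minimal dimension $d_1$, it sums the relations $E_{mj}\rho E_{1k}^\dagger\in S^\perp$ against $E_{1k}$ to trap $\sum_{m\geq 2}\mathcal{K}_m$ inside $\operatorname{span}\{S^\perp\mathcal{K}_1\}(\rho E_1)^{-1}$, giving $(N-1)d_1\leq d_1\dim S^\perp$. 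Some device of this kind --- retaining the Kraus-level (equivalently, all twisted partial-trace) data and counting dimensions against a reference space --- is indispensable; the matrix $\mathcal{T}$ alone cannot carry the proof.
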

\begin{proof}
  The ordering of the five numbers is clear from the definition, 
  and so is the monotonicity.

  For the bound on $\alpha(S)$, note that an independent set requires
  $\ketbra{\phi_m}{\phi_{m'}} \in S^\perp$, for $\1 \leq m\neq m' \leq k$.
  But these operators are clearly mutually orthogonal with respect to the
  Hilbert-Schmidt inner product, hence $k(k-1) \leq \dim S^\perp$.

  For the bound on $\Galpha(S)$, we first present a simple
  argument for $\calpha_U(S)$: Consider an entanglement-assisted
  independent set with trivial $R$: we need a state $\rho \in \mathcal{S}(A)$
  and unitaries $U_1,\ldots,U_N$ such that for all $m\neq m'$,
  $U_m \rho U_{m'}^\dagger \in S^\perp$. 
  In particular, all the operators $U_m\sqrt{\rho}$ are mutually
  orthogonal, implying their linear independence. But then also
  the $U_m \rho U_1^\dagger$ are linearly independent, and they 
  are all in $S^\perp$. Thus, $N-1 \leq \dim S^\perp$.
  
  To bound the generalised independence number, 
  assume that there are $N$ cp maps
  $\mathcal{E}_1,\ldots,\mathcal{E}_N$ from $\mathcal{L}(A_0)$ to $\mathcal{L}(A)$,
  and a state $\rho \in \mathcal{S}(A_0)$.
  In order to write out the condition for a generalised independent set
  explicitly, let us make some additional assumptions. 
  First, we can write $\mathcal{E}_m(\sigma)=\sum_{j=1}^{r(m)} E_{mj} \sigma E_{mj}^\dagger$, 
  where $\{E_{mj}:A_0 \rar A \}_{k=1,\ldots,r(m)}$ is a set of
  Kraus operators for $\mathcal{E}_m$.  
  It is convenient to denote $\mathcal{K}_m := \operatorname{span}\{E_{mj}: 1\leq j\leq r(m) \}$,
  the Kraus operator space of $\mathcal{E}_m$, and $d_m := \dim\mathcal{K}_m$.
  Furthermore,
  $E_m := \sum_j E_{mj}^\dagger E_{mj}$ is a positive definite 
  operator in $\mathcal{L}(A_0)$. We can also assume that $\rho$ is invertible in 
  $\mathcal{L}(A_0)$ as we can always choose $A_0$ to be the support of $\rho$ 
  without changing $N$. Now by a simple calculation,
  the condition for a generalised independent set can be rewritten as follows:
  \[
    E_{mj}\rho E_{m'k}^\dagger \in S^\perp,\quad
    \text{ for }
    1\leq m\neq m'\leq N,\ 1\leq j\leq r(m),\ 1\leq k\leq r(m').
 \]
  Noticing that there is a positive definite operator $M\in S$, this
  implies
  \[
    \tr(M E_{mj}\rho E_{m'k}^\dagger) = 0,\quad
    \text{ for }
    1\leq m\neq m'\leq N,\ 1\leq j\leq r(m),\ 1\leq k\leq r(m').
  \]
  That is, the spaces of linear operators
  $\mathcal{K}_m\sqrt{\rho} = \operatorname{span}\{E_{mj}\sqrt{\rho}: 1\leq j\leq r(m) \}$
  are mutually orthogonal for different $m$, with respect to 
  the generalised inner product given by 
  $\<X,Y\>_{M} = \tr(M X Y^\dagger)$. In particular, the $\mathcal{K}_m$ are 
  linearly independent, and $\mathcal{K}_m \cap \mathcal{K}_{m'} = 0$ for $m \neq m'$.
  Now let us focus on $m'=1$, and w.l.o.g.~assume $d_1 \leq d_m$ for all $m$.
  Then we have
  \[
    E_{mj} \rho E_{1k}^\dagger =: X_{mj,k} \in S^\perp,\quad
    \text{ for }
    2\leq m \leq N,\ 1\leq j\leq r(m),\ 1\leq k\leq r(1).
  \]
  Multiplying both sides of the above from the right by 
  $E_{1k}$, and summing over $k$, we obtain
  \[
    E_{mj} \rho E_1 = X_{mj} := \sum_{k=1}^{r(1)} X_{mj,k} E_{1k} \in S^\perp \mathcal{K}_1,
  \]
  for $2\leq m \leq N$ and $1\leq j \leq r(m)$. Since $\rho$ and $E_1$ are
  invertible, this can be rewritten as
  \[
    E_{mj} \in S^\perp \mathcal{K}_1 (\rho E_1)^{-1},\quad
    \text{ for }
    2 \leq m \leq N,\ 1\leq j\leq r(m),
  \]
  or equivalently 
  \(
    \sum_{m=2}^N \mathcal{K}_m < 
            \operatorname{span}\bigl\{ S^\perp \mathcal{K}_1 \bigr\} (\rho E_1)^{-1}.
  \)
  Noticing that
  \[
    \dim\operatorname{span}\bigl\{ S^\perp \mathcal{K}_1 \bigr\} 
       \leq (\dim S^\perp)(\dim \mathcal{K}_1) = d_1\dim(S^\perp),
  \]
  and $\dim \sum_{m=2}^N \mathcal{K}_m =\sum_{m=2}^N d_m$ 
  (because of linear independence of the $\mathcal{K}_m$), we finally arrive at
  $(N-1) d_1 \leq \sum_{m=2}^N d_m \leq d_1 \dim(S^\perp)$,
  completing the proof.
\end{proof}

\medskip
About the independence numbers $\alpha(S)$, $\alpha_q(S)$ and $\calpha(S)$,
and their associated operational capacities
\begin{align*}
  C_0(S)    &= \lim_{n\rar\infty} \frac{1}{n}\log\alpha\bigl(S^{\ox n}\bigr)
                  \qquad \ \text{[(classical) zero-error capacity]}, \\
  Q_0(S)    &= \lim_{n\rar\infty} \frac{1}{n}\log\alpha_q\bigl(S^{\ox n}\bigr)
                  \qquad \!\text{[quantum zero-error capacity]}, \\
  C_{0E}(S) &= \lim_{n\rar\infty} \frac{1}{n}\log\calpha\bigl(S^{\ox n}\bigr)
                  \qquad \ \text{[entanglement-assisted zero-error capacity]},
\end{align*}
quite a bit is known: In~\cite{Duan-0} examples of $S$ are found
such that $\alpha(S)=1$ but $\alpha(S\ox S) \geq 2$, and examples of
$S_1$ and $S_2$ such that $C_0(S_1)=0$ but $C_0(S_1\ox S_2) \gg C_0(S_2)$;
furthermore, non-commutative graphs $S$ such that $C_0(S) = 0$ but $C_{0E}(S) \geq 1$
(all of which are impossible for classical graphs). In fact, while 
$\alpha(S)$ can be $1$ (and even $C_0(S)=0$) for highly nontrivial graphs
$S$, any non-trivial $S \lvertneqq \mathcal{L}(A)$ (i.e.~not a complete
graph) is easily seen to have $\calpha(S) \geq \calpha_U(S) \geq 2$.
In~\cite{CCH-0} even non-commutative graphs $S_1$ and $S_2$ are shown to
exist such that
$C_0(S_1)=C_0(S_2)=0$, yet $\alpha(S_1\ox S_2) \geq 2$; this result is
further improved in~\cite{superduper} to yield even $\alpha_q(S_1\ox S_2) \geq 2$.

We can similarly define the generalised entanglement-assisted
zero-error capacity
\[
  \widehat{C}_{0E}(S) = \lim_{n\rar\infty} \frac{1}{n}\log\Galpha\bigl(S^{\ox n}\bigr),
\]
which is evidently an upper bound on $C_{0E}(S)$. Both $\Galpha(S)$
and $\widehat{C}_{0E}(S)$ have, by their very definition, an important
symmetry property: for any invertible $W \in \text{GL}(A)$,
\begin{equation}
  \label{eq:congruence}
  \Galpha(S) = \Galpha(W S W^\dagger)\quad\text{and}\quad
  \widehat{C}_{0E}(S) = \widehat{C}_{0E}(W S W^\dagger),
\end{equation}
which are meaningful because $S' = W S W^\dagger$ contains
$WW^\dagger > 0$, and $S'={S'}^\dagger$.
(Note at the same time that $\calpha(S)$ and $C_{0E}(S)$ satisfy 
these equations only if $W$ is a unitary.) We call $S$ and $S'$ as
above \emph{congruent}, $S \simeq S'$,
and denote the congruence class of $S$ by $[S]$.

\medskip
The independence numbers $\alpha_q(S)$, $\alpha(S)$, $\calpha(S)$ and $\Galpha(S)$
are computable: this is obvious for the first two, since they are
formulated in terms of the solvability of a set of real polynomial equations
and inequalities in a finite number of variables.
For the latter two, the potentially unbounded dimension of the entangled
state needed appears to create an issue. However, noting that 
the existence of a zero-error code with $N$ messages can be
cast as real algebraic problem in \emph{non-commuting} variables
with polynomial constraints, we can invoke recent results by 
Pironio \emph{et al.}~\cite{NPA}: these state that a
certain hierarchy of semidefinite programmes asymptotically
characterises the solvability of such constraints by Hilbert
space operators for some sufficiently large dimension.
More precisely, one finds zero-error
codes by solving polynomial equations for increasingly higher 
dimensional entangled states and measurements, and finds 
increasingly better witnesses that certain numbers of messages
cannot be sent with zero-error by climbing higher in the hierarchy.

The algorithms implicit in these remarks are very inefficient (in fact,
we cannot even give an upper bound on the runtime for $\calpha$ and
$\Galpha$), but apart from the QMA-completeness of $\alpha$ 
no results concerning the complexity of the independence numbers have been reported. 
In contrast, as far as we know, none of the asymptotic capacities are even 
known to be decidable -- cf.~\cite{AlonLubetzky}.

\section{A quantum Lov\'{a}sz function}
\label{sec:theta}
For the any non-commutative graph $S < \mathcal{L}(A)$, 
i.e.~$\1\in S$ and $S=S^\dagger$, we make, motivated by
eq.~(\ref{eq:theta-original}), the following definition.
\begin{equation}
  \label{eq:theta-S}
  \vartheta(S) := \max \bigl\{ \| \1+T \| : T \in S^\perp,\ \1+T \geq 0 \bigr\},
\end{equation}
where the norm is the operator norm (i.e.~the largest singular value).
Note that all elements in $S^\perp$ are traceless, hence for
$d=|A|$ the norm on the right hand side is at 
most that of the case where $T$ has $d-1$ eigenvalues $-1$
and a single eigenvalue $d-1$; hence $\vartheta(S) \leq |A|$.

By the discussion in sections~\ref{sec:c-intro} and \ref{sec:q},
for a classical channel $N:X \rightarrow Y$ with confusability
graph $G$, we can model the channel as a cptp map
with Kraus operators $\sqrt{N(y|x)}\ketbra{y}{x}$, 
so $S$ is spanned by all $\ketbra{x'}{x}$ such that $x \sim x'$
or $x=x'$. Thus, the space $S^\perp$
is exactly the set of matrices $T$ with zeros in all entries
$T_{xx'} = 0$ whenever $xx' \in G$ or $x=x'$. Thus, the eligible
$\1+T$ in the definition (\ref{eq:theta-S}) are positive
semidefinite matrices with ones along the diagonal and
zeroes in all entries $(x,x')$ where $x$ and $x'$ are confusable.
The maximum norm in eq.~(\ref{eq:theta-S}) coincides thus
with the expression for $\vartheta(G)$ in~\cite[Thms.~5 and 6]{Lovasz},
and we conclude that $\vartheta(S) = \vartheta(G)$.

The above definition has some desirable properties:
\begin{lemma}
  \label{lemma:theta-alpha}
  For any non-commutative graph $S$, $\alpha(S) \leq \vartheta(S)$. 
  Furthermore, $\vartheta$ is monotonic with respect to
  subgraphs, i.e.~when $S \subset S'$ for two non-commutative graphs,
  then $\vartheta(S) \geq \vartheta(S')$.
\end{lemma}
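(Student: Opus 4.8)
The plan is to treat the two assertions separately, since the monotonicity is essentially immediate once the definition is unwound, while $\alpha(S)\le\vartheta(S)$ carries the genuine content. I would begin with the latter.

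For the inequality, suppose $N=\alpha(S)$ and fix an optimal independent set, i.e.~orthonormal vectors $\ket{\phi_1},\ldots,\ket{\phi_N}\in A$ with $\ketbra{\phi_m}{\phi_{m'}}\in S^\perp$ for all $m\neq m'$. The idea is to exhibit a single feasible $T$ in the definition~(\ref{eq:theta-S}) whose associated norm already reaches $N$. The natural candidate is the ``off-diagonal'' operator
\[
  T_0 := \sum_{m\neq m'}\ketbra{\phi_m}{\phi_{m'}},
\]
which lies in $S^\perp$ because each summand does and $S^\perp$ is a subspace. First I would rewrite $\1+T_0=(\1-P)+Q$, where $P=\sum_m\proj{\phi_m}$ is the projector onto $V:=\operatorname{span}\{\ket{\phi_m}\}$ and $Q=\sum_{m,m'}\ketbra{\phi_m}{\phi_{m'}}=N\proj{u}$ with $\ket{u}=\tfrac{1}{\sqrt N}\sum_m\ket{\phi_m}$ a unit vector in $V$.

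It then remains to verify positivity and compute the norm. On $V^\perp$ the operator $\1+T_0$ acts as the identity, since both $P$ and $Q$ are supported on $V$; on $V$ it acts as $Q|_V=N\proj{u}$. Hence $\1+T_0\ge 0$, and its eigenvalues are $N$, then $0$ (on $V\cap u^\perp$) and $1$ (on $V^\perp$), so that $\|\1+T_0\|=N$. Since $T_0$ is feasible in~(\ref{eq:theta-S}), we conclude $\vartheta(S)\ge N=\alpha(S)$.

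For the monotonicity I would simply observe that $S\subset S'$ reverses under orthogonal complementation to $(S')^\perp\subset S^\perp$. Consequently every $T$ that is feasible for $\vartheta(S')$, i.e.~$T\in(S')^\perp$ with $\1+T\ge 0$, is also feasible for $\vartheta(S)$; the maximum defining $\vartheta(S)$ is thus taken over a larger set, and $\vartheta(S)\ge\vartheta(S')$ follows at once. There is no real obstacle here; the only point requiring genuine care is the explicit check in the first part that the constructed $\1+T_0$ is positive semidefinite with norm exactly $N$, which is precisely where the independence condition $\ketbra{\phi_m}{\phi_{m'}}\in S^\perp$ for $m\neq m'$ is used.
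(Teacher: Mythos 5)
Your proposal is correct and follows essentially the same route as the paper: the same candidate $T=\sum_{m\neq m'}\ketbra{\phi_m}{\phi_{m'}}$, the same positivity argument via $\1+T\geq\sum_{m,m'}\ketbra{\phi_m}{\phi_{m'}}=N\proj{u}$, and the same norm computation giving $\|\1+T\|=N$, with monotonicity read off from $(S')^\perp\subset S^\perp$. Your explicit spectral decomposition of $\1+T$ on $V$, $V\cap u^\perp$ and $V^\perp$ merely spells out what the paper states more tersely.
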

\begin{proof}
The monotonicity is clear from the definition. For the relation
to $\alpha$, consider a maximal size indendent set
$\{ \ket{\phi_m} : m=1,\ldots,N \}$, i.e.~$N=\alpha(S)$. Then, 
$T = \sum_{m\neq m'} \ketbra{\phi_m}{\phi_{m'}} \in S^\perp$.
Furthermore,
\[
  \1+T \geq \sum_{m} \proj{\phi_m} + \sum_{m\neq m'} \ketbra{\phi_m}{\phi_{m'}} 
       =    \sum_{m, m'} \ketbra{\phi_m}{\phi_{m'}} \geq 0,
\]
so $T$ is eligible in the definition of $\vartheta(S)$. On the other hand,
\[
  \| \1+T \| = \left\| \sum_{m, m'} \ketbra{\phi_m}{\phi_{m'}} \right\| = N,
\]
and we are done.
\end{proof}

\begin{lemma}
  \label{lemma:supermult}
  $\vartheta$ is supermultipicative, i.e.~for non-commutative graphs
  $S_1 < \mathcal{L}(A_1)$ and $S_2 < \mathcal{L}(A_2)$,
  \[
    \vartheta(S_1\ox S_2) \geq \vartheta(S_1)\vartheta(S_2).
  \]
\end{lemma}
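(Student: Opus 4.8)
The plan is to produce an explicit feasible point for the maximisation defining $\vartheta(S_1 \ox S_2)$ out of the optimisers for the two factors, so that the product of the optimal norms is witnessed as a lower bound. First I would choose $T_1 \in S_1^\perp$ and $T_2 \in S_2^\perp$ attaining the respective maxima, i.e.\ with $\1+T_1 \geq 0$, $\1+T_2 \geq 0$ and $\|\1+T_1\| = \vartheta(S_1)$, $\|\1+T_2\| = \vartheta(S_2)$. The natural candidate is the tensor product of the two optimal positive operators:
\[
  \1 + T := (\1+T_1) \ox (\1+T_2),
  \quad\text{i.e.}\quad
  T = T_1 \ox \1 + \1 \ox T_2 + T_1 \ox T_2 .
\]

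I would then verify the three requirements in the definition~(\ref{eq:theta-S}). Positivity is immediate, since $\1+T = (\1+T_1)\ox(\1+T_2)$ is a tensor product of two positive semidefinite operators, hence positive semidefinite. The norm is equally quick, as the operator norm is multiplicative on tensor products: $\|\1+T\| = \|\1+T_1\|\,\|\1+T_2\| = \vartheta(S_1)\vartheta(S_2)$. The one step carrying actual content is the membership $T \in (S_1 \ox S_2)^\perp$.

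For that, I would use the Hilbert--Schmidt orthogonal decomposition $\mathcal{L}(A_i) = S_i \oplus S_i^\perp$ in each factor; expanding $\mathcal{L}(A_1)\ox\mathcal{L}(A_2)$ into its four summands identifies the complement as
\[
  (S_1 \ox S_2)^\perp
     = \bigl( S_1^\perp \ox S_2 \bigr) \oplus \bigl( S_1 \ox S_2^\perp \bigr)
         \oplus \bigl( S_1^\perp \ox S_2^\perp \bigr).
\]
The three terms of $T$ then land in the three summands on the right: $T_1 \ox \1 \in S_1^\perp \ox S_2$ and $\1 \ox T_2 \in S_1 \ox S_2^\perp$ (here I invoke the defining property $\1 \in S_1$, $\1 \in S_2$ of a non-commutative graph), while $T_1 \ox T_2 \in S_1^\perp \ox S_2^\perp$. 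Hence $T \in (S_1 \ox S_2)^\perp$, so $T$ is eligible in~(\ref{eq:theta-S}) and $\vartheta(S_1 \ox S_2) \geq \|\1+T\| = \vartheta(S_1)\vartheta(S_2)$.

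The argument is essentially forced once the candidate operator is written down; the only place demanding care -- and the structural reason supermultiplicativity holds -- is the presence of the cross terms $T_1 \ox \1$ and $\1 \ox T_2$, whose membership in $S^\perp$ rests squarely on $\1$ belonging to each $S_i$. I therefore expect no genuine obstacle beyond recording the complement decomposition cleanly and noting where $\1 \in S_i$ is used.
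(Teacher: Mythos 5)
Your proposal is correct and follows essentially the same route as the paper's own proof: both construct $T = T_1\ox\1 + \1\ox T_2 + T_1\ox T_2$ so that $\1+T = (\1+T_1)\ox(\1+T_2)$, and verify positivity, multiplicativity of the operator norm, and membership in $(S_1\ox S_2)^\perp = S_1^\perp\ox S_2 + S_1\ox S_2^\perp + S_1^\perp\ox S_2^\perp$. Your explicit remark that the cross terms rely on $\1 \in S_i$ is a useful clarification the paper leaves implicit, but it is the same argument.
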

\begin{proof}
Observing that the operator
subspace associated to the tensor product $\mathcal{N}_1 \ox \mathcal{N}_2$
of channel $\mathcal{N}_i$ with operator subspaces $S_i$, respectively,
is given by $S_1 \ox S_2$, we can show 

Namely, for $T_i \in S_i^\perp$ and $\1+T_i \geq 0$, we have
\[
  T := T_1\ox\1 + \1\ox T_2 + T_1\ox T_2 
    \in S_1^\perp\ox S_2 + S_1\ox S_2^\perp + S_1^\perp \ox S_2^\perp
    =   (S_1\ox S_2)^\perp,
\]
and $\1+T = (\1+T_1) \ox (\1+T_2) \geq 0$. At the same time
it follows that $\| \1+T \| = \| \1+T_1 \| \, \| \1+T_2 \|$,
and we are done.
\end{proof}

\medskip
It turns out however that, unlike the classical Lov\'{a}sz function, our
definition (\ref{eq:theta-S}) is not multiplicative. In fact,
it fails even for tensoring certain channels with a trivial
channel. (A channel is called \emph{trivial} if it maps all 
states $\rho$ on $A$ to a constant state $\sigma_0$ on $B$,
thus having associated operator subspace $\mathcal{L}(A)$,
which corresponds to the complete non-commutative graph.)

We shall show that one may even take the identity channel
$\id:\mathcal{L}(\CC^d) \rightarrow \mathcal{L}(\CC^d)$,
which has subspace $S = \CC \1_d$. We claim that
\begin{equation}
  \label{eq:identity-x-trivial}
  \sup_n \vartheta\bigl(\1_d \ox \mathcal{L}(\CC^n)\bigr) = d^2.
\end{equation}
\begin{proof}
First we show that the value
$d^2$ can be attained with $n=d$. Namely, for any orthogonal
operator basis of unitaries, including the identity,
$\1=U_0, U_1, \ldots U_{d^2-1}$,
let
\[
  T = \sum_{\alpha=1}^{d^2-1} U_\alpha \ox \overline{U}_\alpha
    = d^2\Phi_d - \1\ox\1,
\]
so that $\1\ox\1+T$ is, up to a normalisation factor of $d^2$,
the maximally entangled state $\Phi_d$.
Since the latter is positive semidefinite, and
\[
  T \in \bigl(\1_d \ox \mathcal{L}(\CC^d)\bigr)^\perp
    =         \1_d^\perp \ox \mathcal{L}(\CC^d),
\]
because all $U_\alpha$ are traceless,
we obtain indeed $\vartheta\bigl(\1_d \ox \mathcal{L}(\CC^d)\bigr) \geq d^2$.

Second, it remains to show that for all $n$,
$\vartheta\bigl(\1_d \ox \mathcal{L}(\CC^n)\bigr) \leq d^2$. For this
consider any $T \in \1_d^\perp \ox \mathcal{L}(\CC^n)$ such that
$\1_d\ox\1_n + T \geq 0$. On the one hand, clearly
$\tr_d (\1_d\ox\1_n + T) = d\1_n$ -- which has norm $d$ --,
on the other hand, it is well-known that the partial trace over
a $d$-dimensional system can change the operator norm (in fact, 
any $p$-norm) by at most a factor of $d$~\cite{Hayden-vanDam}. 
Thus, $\| \1_d\ox\1_n + T \| \leq d^2$. 
\end{proof}

\medskip
This motivates the following better definition, a kind of norm 
completion of $\vartheta$:
\begin{definition}
  \label{defi:c-theta}
  Observing that
  $\bigl( S \ox \mathcal{L}(\CC^n) \bigr)^\perp = S^\perp \ox \mathcal{L}(\CC^n)$,
  let the \emph{quantum Lov\'{a}sz function} be
  \begin{equation}\begin{split}
    \label{eq:c-theta-S}
    \cvartheta(S) &:= \sup_n \vartheta\bigl( S \ox \mathcal{L}(\CC^n) \bigr)               \\
                  & = \sup_n \max \bigl\{ \| \1+T \| : T \in S^\perp\ox\mathcal{L}(\CC^n),
                                                                     \ \1+T \geq 0 \bigr\},
  \end{split}\end{equation}
  where the supremum is over all integers $n$, and the maximum in the
  second line is again over Hermitian operators $T$.
\end{definition}
Note that by our above result on the ideal channel,
also $\cvartheta(S) \leq |A|^2$. And for classical graphs $G$, since
$\vartheta$ is multiplicative and $\mathcal{L}(\CC^n)$ is the
operator space version of the complete graph,
$\cvartheta(S) = \vartheta(S) = \vartheta(G)$.

\medskip
\begin{remark}
From the point of view of operator spaces it might appear rather
natural and pleasing that we have to consider the norm
completion $\cvartheta$, so to speak, of $\vartheta$, by
taking the supremum over tensor products with arbitrary 
full matrix spaces.

There seems to be an analogy to the construction of the completely
bounded norm from the ``naive'' norm of operator maps~\cite{Paulsen-book}.
Much like completely bounded norms~\cite{Watrous}, also our
definition via completion will turn out to be given by a 
semidefinite programme (see the next section).
\end{remark}

\medskip
From the definition it is clear that in general $\cvartheta$ inherits
the supermultiplicativity from $\vartheta$:
\begin{lemma}
  \label{lemma:tilde-theta-supermult}
  $\cvartheta$ is supermultipicative, i.e.~for non-commutative graphs
  $S_1 < \mathcal{L}(A_1)$ and $S_2 < \mathcal{L}(A_2)$,
  \[
    \cvartheta(S_1\ox S_2) \geq \cvartheta(S_1)\cvartheta(S_2).
  \]
  \qed
\end{lemma}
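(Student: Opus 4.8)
The plan is to derive the completed inequality purely from the supermultiplicativity of the uncompleted function $\vartheta$ (Lemma~\ref{lemma:supermult}) together with the stabilisation built into Definition~\ref{defi:c-theta}; no new positivity or norm estimate is needed. First I would fix integers $n_1,n_2$ and observe that $S_1 \ox \mathcal{L}(\CC^{n_1})$ and $S_2 \ox \mathcal{L}(\CC^{n_2})$ are themselves non-commutative graphs (each contains $\1$ and is self-adjoint), so Lemma~\ref{lemma:supermult} applies to them directly, giving
\[
  \vartheta\bigl(S_1 \ox \mathcal{L}(\CC^{n_1})\bigr)\,\vartheta\bigl(S_2 \ox \mathcal{L}(\CC^{n_2})\bigr)
     \leq \vartheta\Bigl( \bigl(S_1 \ox \mathcal{L}(\CC^{n_1})\bigr) \ox \bigl(S_2 \ox \mathcal{L}(\CC^{n_2})\bigr) \Bigr).
\]

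Second, I would reinterpret the right-hand side as a stabilised copy of $S_1 \ox S_2$. Permuting the four tensor factors by a fixed unitary $W$ that reshuffles $A_1,\CC^{n_1},A_2,\CC^{n_2}$ into the order $A_1,A_2,\CC^{n_1},\CC^{n_2}$, and using $\mathcal{L}(\CC^{n_1}) \ox \mathcal{L}(\CC^{n_2}) = \mathcal{L}(\CC^{n_1 n_2})$, carries the operator space on the right to $(S_1\ox S_2)\ox\mathcal{L}(\CC^{n_1 n_2})$. The key observation is that $\vartheta$ is invariant under such a unitary congruence: conjugation by $W$ preserves the Hilbert--Schmidt orthogonal complement (so $W S^\perp W^\dagger = (WSW^\dagger)^\perp$), preserves positivity of $\1+T$, and preserves $\|\1+T\|$. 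Hence $\vartheta$ of the reordered space equals $\vartheta$ of the original, and the right-hand side above is at most $\sup_n \vartheta\bigl((S_1\ox S_2)\ox\mathcal{L}(\CC^n)\bigr) = \cvartheta(S_1\ox S_2)$.

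Finally, taking the supremum over $n_1$ and $n_2$ on the left-hand side yields $\cvartheta(S_1)\cvartheta(S_2) \leq \cvartheta(S_1\ox S_2)$, as claimed. The only step requiring genuine care — and the crux of the argument — is the bookkeeping in the second paragraph: verifying that the tensor-factor permutation is an honest unitary congruence leaving both the constraints ($T \in S^\perp$ and $\1+T\geq 0$) and the objective $\|\1+T\|$ unchanged, and that merging the two trivial factors produces exactly the $(S_1\ox S_2)\ox\mathcal{L}(\CC^n)$ form appearing in Definition~\ref{defi:c-theta}. Everything else is a direct appeal to Lemma~\ref{lemma:supermult} and the monotone supremum defining $\cvartheta$, which is why the statement is essentially immediate.
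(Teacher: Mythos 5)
Your argument is correct and is exactly the fleshed-out version of what the paper leaves implicit (the paper states the lemma with no proof, remarking only that $\cvartheta$ ``inherits'' supermultiplicativity from $\vartheta$ via Lemma~\ref{lemma:supermult}): apply Lemma~\ref{lemma:supermult} to the stabilised graphs $S_i \ox \mathcal{L}(\CC^{n_i})$, reorder tensor factors by a unitary congruence that preserves $\vartheta$, and take suprema. The bookkeeping you flag — that the permutation maps Hilbert--Schmidt orthogonal complements to orthogonal complements and preserves positivity and the operator norm — all checks out, so there is nothing to add.
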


More importantly, however, it is related to the entanglement-assisted
independence number:
\begin{lemma}
  \label{lemma:tilde-theta-alpha}
  For any non-commutative graph $S$, $\calpha(S) \leq \cvartheta(S)$. 
  Furthermore, $\cvartheta$ is monotonic with respect to
  subgraphs, i.e.~when $S \subset S'$ for two non-commutative graphs,
  then $\cvartheta(S) \geq \cvartheta(S')$.
\end{lemma}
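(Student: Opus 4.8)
The monotonicity is immediate and I would dispose of it first: if $S\subset S'$ then $S\ox\mathcal L(\CC^n)\subset S'\ox\mathcal L(\CC^n)$ for every $n$, so the subgraph monotonicity of $\vartheta$ (Lemma~\ref{lemma:theta-alpha}) gives $\vartheta(S\ox\mathcal L(\CC^n))\geq\vartheta(S'\ox\mathcal L(\CC^n))$, and taking the supremum over $n$ yields $\cvartheta(S)\geq\cvartheta(S')$. The substance is the inequality $\calpha(S)\leq\cvartheta(S)$, and the plan is to turn an optimal entanglement-assisted independent set into a feasible point of the programme defining $\vartheta$ on an enlarged system. I would start from the reformulation derived above: a set of size $N=\calpha(S)$ is a state $\rho\in\mathcal S(A\ox R)$ together with unitaries $U_1,\ldots,U_N$ on $A\ox R$ such that $U_m\rho U_{m'}^\dagger\in S^\perp\ox\mathcal L(R)=(S\ox\mathcal L(R))^\perp$ for all $m\neq m'$, recalling that $\1\in S\ox\mathcal L(R)$ additionally forces $\tr(\rho\, U_{m'}^\dagger U_m)=0$. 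I would then adjoin a flag register $\CC^N$ and work on the ancilla $R\ox\CC^N$, so that the relevant quantity is $\vartheta(S\ox\mathcal L(R\ox\CC^N))$, which by definition is at most $\cvartheta(S)$.

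Concretely, I would set $T:=\sum_{m\neq m'}U_m\rho U_{m'}^\dagger\ox\ketbra{m}{m'}$. Feasibility is then built in, since each summand lies in $S^\perp\ox\mathcal L(R)\ox\mathcal L(\CC^N)=(S\ox\mathcal L(R\ox\CC^N))^\perp$. For positivity I would introduce $\Sigma:=\sum_{m,m'}U_m\rho U_{m'}^\dagger\ox\ketbra{m}{m'}=ZZ^\dagger\geq 0$ with $Z=\sum_m U_m\sqrt\rho\ox\ket m$, together with the diagonal part $\Delta:=\sum_m U_m\rho U_m^\dagger\ox\proj m$; as each diagonal block is unitarily conjugate to $\rho$ one has $\|\Delta\|=\|\rho\|\leq 1$, whence $\1+T=(\1-\Delta)+\Sigma\geq 0$. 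Thus $T$ is admissible in the maximisation defining $\vartheta(S\ox\mathcal L(R\ox\CC^N))$, and both the monotonicity and this feasibility step are routine.

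The hard part, which I expect to be the main obstacle, is the lower bound $\|\1+T\|\geq N$. In the special case of a \emph{pure} shared state -- equivalently, unitary encodings, where $\rho=\proj\psi$ -- the vectors $\ket{\phi_m}=U_m\ket\psi$ are orthonormal, $\ketbra{\phi_m}{\phi_{m'}}=U_m\rho U_{m'}^\dagger$, and the classical Lov\'asz estimate of Lemma~\ref{lemma:theta-alpha} applies verbatim to give $\|\1+T\|=N$ (incidentally showing $\calpha_U(S)\le\cvartheta(S)$ and recovering the right value). For a genuinely mixed, i.e.\ entangled, shared state the bare construction falls short: diagonalising $\1+T$ block by block in the flag register yields only $\|\1+T\|=1+(N-1)\|\rho\|<N$, the deficit being precisely the mixedness $\|\rho\|<1$ of the marginal produced by the shared entanglement. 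This is exactly the point at which the norm-completion of Definition~\ref{defi:c-theta} is genuinely needed rather than cosmetic.

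To close this gap I would exploit the remaining freedom to add diagonal corrections $\sum_m P_m\ox\proj m$ with $P_m\in S^\perp\ox\mathcal L(R)$, which preserve feasibility because $\1\in S\ox\mathcal L(R)$ only forces $\tr P_m=0$, and choose them so that $\1+T$ dominates a rank-one operator $N\proj\Psi$ of norm $N$ while staying positive. Since the off-diagonal blocks $U_m\rho U_{m'}^\dagger$ have the fixed rank of $\rho$, no single vector $\ket\Psi$ can match them and the argument is intrinsically a positive-semidefinite completion problem; I therefore expect the cleanest route to be to defer the estimate to the semidefinite-programming characterisation of $\cvartheta$ developed in Section~\ref{sec:sdp}, reading off a feasible solution of that programme (or of its dual) directly from the data $(\rho,U_m)$ to certify $\cvartheta(S)\geq N$. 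Either way the entire difficulty is concentrated in producing, from an entangled code, a test operator whose operator norm reaches $N$ rather than the naive $1+(N-1)\|\rho\|$.
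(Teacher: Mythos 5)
Your setup is exactly the paper's: the monotonicity argument is the intended one, and the test operator $T=\sum_{m\neq m'}U_m\rho\, U_{m'}^\dagger\ox\ketbra{m}{m'}$ on the enlarged ancilla, with positivity via $\1+T\geq ZZ^\dagger$, is precisely the construction used there. But the proof is not complete: you correctly diagnose that with the trace-normalised $\rho$ the test vector only certifies $1+(N-1)\|\rho\|<N$, and then you stop, offering two unexecuted escape routes (diagonal corrections in $S^\perp\ox\mathcal L(R)$, or reading a feasible point off the SDP of Section~\ref{sec:sdp}). Neither is carried out, so the key inequality $\calpha(S)\leq\cvartheta(S)$ is not actually established.

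The missing idea is much simpler than a semidefinite completion problem: the conditions $U_m\rho\, U_{m'}^\dagger\in S^\perp\ox\mathcal L(R)$ are \emph{homogeneous} in $\rho$, so you are free to rescale. Replace $\rho$ by $X=\rho/\|\rho\|$, i.e.\ normalise to largest eigenvalue $1$ rather than trace $1$, writing $X=\proj{\varphi}+X'$ with $\|X'\|\leq 1$. The identical construction $T=\sum_{m\neq m'}U_m X U_{m'}^\dagger\ox\ketbra{m}{m'}$ remains feasible, positivity still holds because the block-diagonal part $\sum_m U_m X U_m^\dagger\ox\proj{m}$ has norm $\|X\|=1\leq 1$ so that $\1+T\geq MM^\dagger$ with $M=\sum_m U_m\sqrt{X}\ox\ket{m}$, and your own test vector $\ket{\phi}=\frac{1}{\sqrt N}\sum_m U_m\ket{\varphi}\ox\ket{m}$ now gives $\bra{\phi}MM^\dagger\ket{\phi}=N\bra{\varphi}X\ket{\varphi}=N$. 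In your notation this is exactly $1+(N-1)\|X\|=N$: the ``deficit'' you identify is an artefact of the normalisation, not an intrinsic obstruction, and your claim that the problem is ``intrinsically a positive-semidefinite completion problem'' is therefore off the mark. This one-line rescaling is what the paper does, and it closes the gap without any appeal to the SDP characterisation.
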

\begin{proof} 
The monotonicity is inherited from $\vartheta$. For
the relation to $\calpha$,
the argument is an extension of the one we made for the
unassisted case and $\vartheta(S)$. Namely, recall that we may
pad the channel by a sufficiently large dummy register that
goes into a trivial channel, and find a state $\rho \in \mathcal{S}(A\ox R)$
and unitaries $U_m$ on $A\ox R$ such that
for $1 \leq m \neq m' \leq N$, 
\[
  U_m \rho U_{m'}^\dagger \in S^\perp \ox \mathcal{R}.
\]
Evidently this is unchanged under rescaling $\rho$,
so we replace it by a multiple $X$ with largest eigenvalue $1$:
$X = \proj{\varphi} + X'$,
where $X' \perp \proj{\varphi}$ is a rest which satisfies $\| X' \| \leq 1$.
Now we consider the candidate
\[
  T = \sum_{m\neq m'} U_m X U_{m'}^\dagger \ox \ketbra{m}{m'}
    \in S^\perp \ox \mathcal{L}(R\ox \CC^n).
\]
This is an eligible operator in eq.~(\ref{eq:c-theta-S}) because
\[\begin{split}
  \1 + T &=    \1 + \sum_{m\neq m'} U_m X U_{m'}^\dagger \ox \ketbra{m}{m'} \\
         &\geq \sum_{mm'} U_m X U_{m'}^\dagger \ox \ketbra{m}{m'}           \\
         &=    \left( \sum_{m} U_m\sqrt{X} \ox \ket{m} \right)
                            \left( \sum_{m'} \sqrt{X}U_{m'}^\dagger \ox \bra{m'} \right) 
          =    MM^\dagger \geq 0, 
\end{split}\]
where in the second line we have used $\1 \geq \sum_m U_m X U_m^\dagger \ox \proj{m}$.

Finally, to bound the norm, define the unit vector
$\ket{\phi} = \frac{1}{\sqrt{N}}\sum_m U_m\ket{\varphi} \ox \ket{m}$.
Then observe
\[
  \| \1 + T \| \geq \| MM^\dagger \|
               \geq \bra{\phi} M M^\dagger \ket{\phi}
               =    N,
\]
which completes the proof.
\end{proof}


\section{Semidefinite formulation and other properties}
\label{sec:sdp}
We shall now simplify the expression for $\cvartheta$,
putting an a priori limit on the dimension $n$ of the extension system.
Namely, for fixed $n$, we have
\[
   \| \1\ox\1+T \| =  \max_{\ket{\phi}}\ \bra{\phi} (\1\ox\1+T) \ket{\phi},
\]
for $T \in S^\perp\ox\mathcal{L}(\CC^n)$, $\1\ox\1+T \geq 0$, where
the maximum is over unit vectors in $A\ox\CC^n$.
Now we can use a trick analogous to Lov\'{a}sz'~\cite[Theorem 4]{Lovasz}:
with the maximally entangled vector $\ket{\Phi} = \sum_{i=1}^{|A|} \ket{i}^A\ket{i}^{A'}$
there exists an operator $M:A' \rar \CC^n$ such that
$\ket{\phi} = (\1\ox M)\ket{\Phi}$. Thanks to $\tr_A \Phi = \1_{A'}$,
the normalisation of $\ket{\phi}$ translates into $\rho = M^\dagger M$ being
a state (i.e. of trace $1$) on $A'$.
Thus,
\[
  \bra{\phi} (\1\ox\1+T) \ket{\phi} 
      = \bra{\Phi} \bigl( \1\ox\rho + (\1\ox M^\dagger)T(\1\ox M) \bigr) \ket{\Phi},
\]
and the crucial observation is that
$T' = (\1\ox M^\dagger)T(\1\ox M) \in S^\perp \ox \mathcal{L}(A')$.
As a consequence, we have proved
\begin{theorem}
\label{thm:c-theta-SDP}
For any non-commutative graph $S < \mathcal{L}(A)$,
\begin{align}
  \label{eq:c-theta-SDP}
  \cvartheta(S) &= \max\ \bra{\Phi} (\1\ox\rho + T') \ket{\Phi} \\
          & \phantom{==}\text{s.t. }\ T' \in S^\perp\ox\mathcal{L}(A'), \quad \tr\rho = 1, \nonumber \\
          & \phantom{==\text{s.t. }}\ \1\ox\rho+T' \geq 0, \quad \rho \geq 0, \nonumber
\end{align}
which is a semidefinite characterisation of $\cvartheta$. \qed
\end{theorem}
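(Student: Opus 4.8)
The plan is to prove the identity as two opposite inequalities, leveraging the substitution $\ket{\phi} = (\1\ox M)\ket{\Phi}$, $\rho = M^\dagger M$, $T' = (\1\ox M^\dagger)T(\1\ox M)$ developed above. For $\cvartheta(S) \leq \max(\mathrm{SDP})$ I essentially read off the displayed computation: for every extension dimension $n$ and every eligible $T \in S^\perp\ox\mathcal{L}(\CC^n)$, choosing $\ket{\phi}$ to attain $\|\1\ox\1+T\| = \max_\phi \bra{\phi}(\1\ox\1+T)\ket{\phi}$ yields a pair $(\rho, T')$ whose objective equals $\|\1\ox\1+T\|$. I would only need to verify it is \emph{feasible}: $\1\ox\rho+T' = (\1\ox M^\dagger)(\1\ox\1+T)(\1\ox M) \geq 0$, while $\tr\rho=1$ and $T'\in S^\perp\ox\mathcal{L}(A')$ were already observed. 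The point that makes this an a priori finite-dimensional programme is that $\rho$ and $T'$ always live on the \emph{fixed} space $A'$ with $\dim A' = |A|$, independently of how large $n$ was; taking the supremum over $n$ then gives the inequality.

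For the reverse inequality I would run the substitution backwards at $n = |A|$, identifying $\CC^n$ with $A'$. Given a feasible $(\rho, T')$ and assuming for the moment that $\rho$ is invertible, set $\ket{\phi} = (\1\ox\sqrt{\rho})\ket{\Phi}$ (a unit vector, since $\tr\rho=1$) and $T = (\1\ox\rho^{-1/2})\,T'\,(\1\ox\rho^{-1/2}) \in S^\perp\ox\mathcal{L}(\CC^n)$. Then $\1\ox\1+T = (\1\ox\rho^{-1/2})(\1\ox\rho+T')(\1\ox\rho^{-1/2}) \geq 0$ makes $T$ eligible in $\vartheta(S\ox\mathcal{L}(\CC^n))$, and one reads off $\vartheta(S\ox\mathcal{L}(\CC^n)) \geq \|\1\ox\1+T\| \geq \bra{\phi}(\1\ox\1+T)\ket{\phi} = \bra{\Phi}(\1\ox\rho+T')\ket{\Phi}$, the SDP objective. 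Since $\cvartheta(S) \geq \vartheta(S\ox\mathcal{L}(\CC^n))$, this would close the loop.

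The main obstacle is precisely the invertibility assumption on $\rho$, which a maximiser need not satisfy. I would remove it by a feasibility-preserving perturbation: replace $(\rho, T')$ by $\rho_\epsilon = (1-\epsilon)\rho + \frac{\epsilon}{|A|}\1$ and $T'_\epsilon = (1-\epsilon)T'$. Here $\rho_\epsilon$ is a full-rank state, $T'_\epsilon\in S^\perp\ox\mathcal{L}(A')$, and $\1\ox\rho_\epsilon+T'_\epsilon = (1-\epsilon)(\1\ox\rho+T') + \frac{\epsilon}{|A|}\1\ox\1 \geq 0$ as a sum of positive semidefinite terms, so $(\rho_\epsilon, T'_\epsilon)$ is feasible with invertible $\rho_\epsilon$. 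Applying the invertible case gives $\cvartheta(S) \geq \bra{\Phi}(\1\ox\rho_\epsilon+T'_\epsilon)\ket{\Phi}$; as the objective is linear, hence continuous, in $(\rho, T')$, letting $\epsilon\to 0$ recovers $\cvartheta(S) \geq \bra{\Phi}(\1\ox\rho+T')\ket{\Phi}$, and maximising over feasible pairs completes the argument. That the result is a genuine semidefinite programme is then clear: the objective is linear, the constraints $T'\in S^\perp\ox\mathcal{L}(A')$ and $\tr\rho=1$ are affine, and the two remaining constraints are positive-semidefiniteness conditions.
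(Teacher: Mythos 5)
Your proof is correct and follows the same route as the paper: the Lov\'{a}sz-style substitution $\ket{\phi} = (\1\ox M)\ket{\Phi}$, $\rho = M^\dagger M$, $T' = (\1\ox M^\dagger)T(\1\ox M)$ gives $\cvartheta(S)\leq\max(\mathrm{SDP})$, which is the only direction the paper writes out explicitly. Your reverse inequality --- running the substitution backwards with $M=\sqrt{\rho}$ at $n=|A|$, and removing the invertibility assumption on $\rho$ via the feasibility-preserving perturbation $\rho_\epsilon = (1-\epsilon)\rho+\frac{\epsilon}{|A|}\1$, $T'_\epsilon=(1-\epsilon)T'$ followed by $\epsilon\to 0$ --- is a correct and careful filling-in of a step the paper leaves implicit.
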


This has two important consequences: first, we have now an optimisation
with a \emph{bounded} dimension of the extension (namely $|A|$)
and furthermore it is semidefinite~\cite{SDP}, so it is computable efficiently. 
Second, and much deeper,
we have a dual semidefinite programme for the same value that is a minimisation 
problem and allows us to put upper bounds on $\cvartheta(S)$.

\begin{theorem}
\label{thm:c-theta-dual-SDP}
The dual of the semidefinite programme (\ref{eq:c-theta-SDP}) gives
\begin{align}
  \label{eq:c-theta-SDP-dual}
  \cvartheta(S) &= \min\ \| \tr_A Y \| \\
                & \phantom{==}\text{s.t. }\ Y \in S \ox \mathcal{L}(A'), \quad
                                            Y \geq \Phi, \nonumber
\end{align}
where $A'$ is isomorphic to $A$.
\end{theorem}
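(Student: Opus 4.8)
The plan is to recognise the primal characterisation (\ref{eq:c-theta-SDP}) as a semidefinite programme in the variables $\rho\in\mathcal{L}(A')$ and $T'\in S^\perp\ox\mathcal{L}(A')$, and to obtain (\ref{eq:c-theta-SDP-dual}) as its Lagrange dual, verifying strong duality by exhibiting a strictly feasible primal point. Writing $\Phi=\proj{\Phi}$ for the (unnormalised) maximally entangled operator, and using $\bra{\Phi}(\1\ox\rho)\ket{\Phi}=\tr\rho$, I would first form the Lagrangian
\[
  L = \tr\rho + \bra{\Phi}T'\ket{\Phi} + \tr\bigl(Y(\1\ox\rho+T')\bigr) + \tr(Z\rho) + \mu(\tr\rho-1),
\]
introducing a dual operator $Y\geq 0$ on $A\ox A'$ for the constraint $\1\ox\rho+T'\geq 0$, a dual operator $Z\geq 0$ on $A'$ for $\rho\geq 0$, and a scalar $\mu$ for $\tr\rho=1$. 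For feasible primal points the added terms are nonnegative, so $\sup_{\rho,T'}L$ is an upper bound on $\cvartheta(S)$; minimising this supremum over the dual variables is the dual programme.

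Next I would take the supremum of $L$ over the primal variables. Collecting the $T'$-terms gives $\bra{\Phi}T'\ket{\Phi}+\tr(YT')=\tr\bigl((\Phi+Y)T'\bigr)$, which stays bounded over the \emph{subspace} $T'\in S^\perp\ox\mathcal{L}(A')$ only if $\Phi+Y$ is orthogonal to it; here I would invoke the identity $\bigl(S^\perp\ox\mathcal{L}(A')\bigr)^\perp = S\ox\mathcal{L}(A')$, which follows from $(S^\perp)^\perp=S$ and $\mathcal{L}(A')^\perp=0$. This yields the constraint $\Phi+Y\in S\ox\mathcal{L}(A')$. Collecting the $\rho$-terms gives $\tr\bigl(\rho(\1_{A'}+\tr_A Y + Z + \mu\1_{A'})\bigr)$, using $\tr\bigl(Y(\1\ox\rho)\bigr)=\tr\bigl((\tr_A Y)\rho\bigr)$; its supremum over Hermitian $\rho$ is finite only when $\tr_A Y + Z = -(1+\mu)\1_{A'}$, and then the dual objective is the leftover constant $-\mu$.

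Eliminating the slack $Z\geq 0$ from $\tr_A Y + Z=-(1+\mu)\1_{A'}$ gives $(1+\mu)\1\leq -\tr_A Y\leq 0$, so $-\mu$ is minimised at $-\mu = \|\tr_A Y\|+1$, leaving $\cvartheta(S)=\min\{\|\tr_A Y\|+1 : Y\geq 0,\ \Phi+Y\in S\ox\mathcal{L}(A')\}$. I expect the crux to be turning this into the stated form: substituting $\widehat Y=\Phi+Y$ gives $\widehat Y\in S\ox\mathcal{L}(A')$ and $\widehat Y\geq\Phi$, and since $\tr_A\Phi=\1_{A'}$ the objective becomes $\|\tr_A\widehat Y-\1\|+1$. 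The key observation is that $\widehat Y\geq\Phi$ forces $\tr_A\widehat Y\geq\tr_A\Phi=\1_{A'}$ by positivity of the partial trace, so $\tr_A\widehat Y-\1\geq 0$ and its top eigenvalue equals $\|\tr_A\widehat Y\|-1$; hence $\|\tr_A\widehat Y-\1\|+1 = \|\tr_A\widehat Y\|$, which is exactly (\ref{eq:c-theta-SDP-dual}) after renaming $\widehat Y$ back to $Y$.

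Finally I would secure strong duality, so that the dual minimum is attained and equals the primal value, by Slater's condition: the primal point $\rho=\frac{1}{|A|}\1_{A'}$, $T'=0$ is strictly feasible, since then $\1\ox\rho=\frac{1}{|A|}\1>0$, $\rho>0$, and $\tr\rho=1$. The genuinely delicate ingredients are the subspace-orthogonality identity and the final norm simplification via $\tr_A\widehat Y\geq\1$; the remainder is routine semidefinite-programming bookkeeping.
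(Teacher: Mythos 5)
Your derivation is correct and follows essentially the same route as the paper: both compute the Lagrange/SDP dual of (\ref{eq:c-theta-SDP}), obtain the constraint $Y\in S\ox\mathcal{L}(A')$ by dualising the subspace condition on $T'$ via $\bigl(S^\perp\ox\mathcal{L}(A')\bigr)^\perp=S\ox\mathcal{L}(A')$, identify the objective as $\|\tr_A Y\|$, and close the gap by exhibiting feasible points (your explicit Slater point $\rho=\1/|A|$, $T'=0$ is, if anything, a slightly more careful justification of strong duality than the paper's bare feasibility check). The only cosmetic difference is bookkeeping: the paper packages $\rho$ and $\1\ox\rho+T'$ into a single block-diagonal standard-form variable and dualises the constraint map, whereas you introduce multipliers directly; the resulting programme is identical.
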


\noindent
Before we prove this, we record an immediate corollary:
\begin{corollary}
  \label{cor:c-theta-multi}
  $\cvartheta$ is multiplicative: for non-commutative graphs
  $S_1 < \mathcal{L}(A_1)$ and $S_2 < \mathcal{L}(A_2)$,
  \begin{equation}
    \label{eq:c-theta-multi}
    \cvartheta(S_1 \ox S_2) = \cvartheta(S_1) \cvartheta(S_2).
  \end{equation}
\end{corollary}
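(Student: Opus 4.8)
The plan is to combine the supermultiplicativity already recorded in Lemma~\ref{lemma:tilde-theta-supermult}, which gives $\cvartheta(S_1 \ox S_2) \geq \cvartheta(S_1)\cvartheta(S_2)$, with the reverse inequality, which I would extract from the \emph{dual} semidefinite programme of Theorem~\ref{thm:c-theta-dual-SDP}. Since that dual is a minimisation, any feasible point furnishes an upper bound on its value, which by strong duality equals $\cvartheta$. Hence it suffices to manufacture a single good feasible $Y$ for the product graph $S_1 \ox S_2$ out of optimal dual solutions for the two factors.

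First I would fix optimal dual solutions $Y_i \in S_i \ox \mathcal{L}(A_i')$ with $Y_i \geq \Phi_i$ and $\| \tr_{A_i} Y_i \| = \cvartheta(S_i)$ for $i=1,2$, where $\Phi_i$ is the projector onto the maximally entangled vector on $A_i \ox A_i'$. The candidate for the product graph is $Y := Y_1 \ox Y_2$, read after the natural reordering of tensor legs that groups the input systems $A_1 A_2$ together and the ancilla systems $A_1' A_2'$ together. Three things then need checking. Subspace membership is immediate, since $\bigl(S_1 \ox \mathcal{L}(A_1')\bigr) \ox \bigl(S_2 \ox \mathcal{L}(A_2')\bigr)$ regroups to $(S_1 \ox S_2) \ox \mathcal{L}(A_1' \ox A_2')$, which is exactly the admissible space for $S = S_1 \ox S_2$. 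The positivity constraint $Y \geq \Phi$ rests on two observations: that the maximally entangled projector on $(A_1 A_2) \ox (A_1' A_2')$ factors as $\Phi_1 \ox \Phi_2$ under the regrouping (the same identity that lets a maximally entangled state factor across a bipartition), and that $Y_1 \geq \Phi_1 \geq 0$ and $Y_2 \geq \Phi_2 \geq 0$ force $Y_1 \ox Y_2 \geq \Phi_1 \ox \Phi_2$ via the telescoping identity $Y_1 \ox Y_2 - \Phi_1 \ox \Phi_2 = (Y_1 - \Phi_1)\ox Y_2 + \Phi_1 \ox (Y_2 - \Phi_2)$, each summand being a tensor product of positive semidefinite operators and hence positive. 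Finally the objective multiplies out: $\tr_{A_1 A_2}(Y_1 \ox Y_2) = (\tr_{A_1} Y_1) \ox (\tr_{A_2} Y_2)$, whose operator norm is $\| \tr_{A_1} Y_1 \| \, \| \tr_{A_2} Y_2 \| = \cvartheta(S_1)\cvartheta(S_2)$.

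Putting these together, $Y$ is dual-feasible for $S_1 \ox S_2$ with objective value $\cvartheta(S_1)\cvartheta(S_2)$, so $\cvartheta(S_1 \ox S_2) \leq \cvartheta(S_1)\cvartheta(S_2)$; with Lemma~\ref{lemma:tilde-theta-supermult} this yields the asserted equality.

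I expect the main obstacle to be bookkeeping rather than conceptual: one must track the reordering of the four tensor factors with care so that the product maximally entangled projector genuinely equals $\Phi_1 \ox \Phi_2$, and so that both the partial trace over $A_1 A_2$ and the subspace condition respect this regrouping. Once the legs are matched correctly, each of the three verifications collapses to a one-line computation, and the multiplicativity of the operator norm on tensor products delivers the sharp constant.
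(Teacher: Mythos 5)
Your proposal is correct and follows essentially the same route as the paper: supermultiplicativity from Lemma~\ref{lemma:tilde-theta-supermult} combined with tensoring dual-feasible solutions of Theorem~\ref{thm:c-theta-dual-SDP} and the multiplicativity of the operator norm, the only difference being that you spell out the feasibility of $Y_1 \ox Y_2$ (regrouping of tensor legs, $\Phi = \Phi_1 \ox \Phi_2$, and the telescoping positivity argument) which the paper dismisses as ``clearly''.
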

Indeed, we know already that it is supermultiplicative, so we only have
to show that it is also submultiplicative. I.e., for two subspaces
$S_i < \mathcal{L}(A_i)$,
$\cvartheta(S_1\ox S_2) \leq \cvartheta(S_1)\cvartheta(S_2)$.
But that we can read off from the dual: if $Y_1$ is dual feasible
for $S_1$ and $Y_2$ for $S_2$, then clearly $Y_1 \ox Y_2$ is
dual feasible for $S_1 \ox S_2$. At the same time,
\(
  \| \tr_{A_1A_2} Y_1 \ox Y_2 \| = \| (\tr_{A_1} Y_1) \ox (\tr_{A_2} Y_2) \|
                                 = \| (\tr_{A_1} Y_1) \| \, \| (\tr_{A_2} Y_2) \|,
\)
and we are done. \qed

\medskip\noindent
\begin{proof}{\bf (of Theorem~\ref{thm:c-theta-dual-SDP})}
The primal is a semidefinite programme of the general form
\[
  \max\ \tr CX \ \text{ s.t. }\ \ell(X)=\underline{b},\ X \geq 0,
\]
with a linear vector-valued function $\ell:\mathcal{L}(H)_{\text{sa}} \rar \RR^n$.
The dual of such a form is given by
\[
  \min\ \underline{b}^\top\cdot\underline{y} \ \text{ s.t. }\  \ell'(\underline{y}) \geq C,
\]
where $\ell':\RR^n \rar \mathcal{L}(H)_{\text{sa}}$ is the adjoint linear map
to $\ell$~\cite{SDP}.

In the present case, let $d=|A|$; the matrix $X$ will be
\[
  X = \left[\begin{array}{cc}
        X_{11} & * \\
          *  & X_{22}
      \end{array}\right]
    = \left[\begin{array}{cc}
        \rho & * \\
          *  & \1\ox\rho+T'
      \end{array}\right],
\]
and the linear constraint has to ensure this form of the matrix, the
trace normalisation of $X_{11}$ and the fact that $T'$ is
orthogonal to $S\ox\mathcal{L}(A')$. The objective function is given by 
$C = \left[\begin{array}{cc} 0 & 0 \\ 0 & \Phi \end{array}\right]$.

Thus, fixing an operator basis $\{F_\alpha\}_\alpha$ of $S$, and a 
basis $\{G_\beta\}_\beta$ of $\mathcal{L}(A')$,
the components of $\ell$ are
\begin{align*}
  \ell_0(X)             &= \tr X_{11} 
                         = \tr X\left[\begin{array}{cc} \1 & 0 \\ 0 & 0 \end{array}\right]
                         =: \tr X L_0, \\
  \ell_{\alpha\beta}(X) &= \tr(F_\alpha \ox G_\beta)(X_{22}-\1\ox X_{11})
                         = \tr X\left[\begin{array}{cc}
                                  (-\tr F_\alpha)G_\beta & 0 \\ 
                                                       0 & F_\alpha \ox G_\beta 
                                \end{array}\right]
                         =: \tr X L_{\alpha\beta},
\end{align*}
while $b_0=1$ and all other $b_{\alpha\beta} = 0$.
With these notations, the adjoint map $\ell'$ can be constructed as
\[
  \ell'(\underline{y}) = y_0 L_0 + \sum_{\alpha\beta} y_{\alpha\beta} L_{\alpha\beta}.
\]
Using that the second term in $\ell'$ is a generic element of
$S\ox\mathcal{L}(A')$, we can simplify our expressions, and find 
that the objective function is $y_0$, and that
\[\begin{split}
  \ell'(\underline{y}) &= \left[\begin{array}{cc} Y_{11} & 0 \\ 0 & Y_{22} \end{array}\right], 
                                                                              \ \text{ where}\\
                       &\phantom{==} Y_{22} \in S\ox\mathcal{L}(A'),                         \\
                       &\phantom{==} Y_{11} = y_0\1 - \tr_A Y_{22}.
\end{split}\]
I.e., the constraints are $Y_{22} \geq \Phi$ and $y_0\1 \geq \tr_A Y_{22}$,
proving the form of the dual, since the optimal $y_0$ is the norm
(maximum eigenvalue) of $\tr_A Y_{22}$.

To finish, we only need to verify feasibility of both
primal and dual; for the primal this is shown by $T'=0$, 
for the dual by $Y=d\1\ox\1$. Thus, the conditions of strong
duality are fulfilled, both primal and dual optimal 
values are finite and equal.
\end{proof}

\medskip
Using this dual, we can now show that $\cvartheta$ is monotonic under
pre- and post-processings of the channel defining $S$, and more generally
under enlarging the graph and going to induced subgraphs.
\begin{corollary}
  \label{cor:tilde-theta-mono}
  For non-commutative graphs $S < \widehat{S}$, 
  $\cvartheta(S) \geq \cvartheta(\widehat{S})$.
  
  For a non-commutative graph $S$, let $U:A_0 \hookrightarrow A$
  be an isometry and consider the induced subgraph
  $S' = U^\dagger S U < \mathcal{L}(A_0)$. Then, $\cvartheta(S') \leq \cvartheta(S)$.
  
  As a consequence,
  let $S = \widehat{\mathcal{N}}^*(\mathcal{L}(C))$ with a channel
  $\mathcal{N}:\mathcal{L}(A) \rar \mathcal{L}(B)$. 
  Then, $\cvartheta$ is non-increasing when going to non-commutative
  graphs obtained by either pre- or post-processing $\mathcal{N}$.
\end{corollary}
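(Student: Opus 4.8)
The plan is to treat the three assertions in turn, letting the dual semidefinite programme of Theorem~\ref{thm:c-theta-dual-SDP} do essentially all the work. The first assertion, $\cvartheta(S) \geq \cvartheta(\widehat{S})$ for $S < \widehat{S}$, is nothing but the subgraph monotonicity already recorded in Lemma~\ref{lemma:tilde-theta-alpha}; from the dual it is in any case immediate, since $S < \widehat{S}$ gives $S \ox \mathcal{L}(A') \subseteq \widehat{S} \ox \mathcal{L}(A')$, so every $Y$ that is dual-feasible for $S$ is also dual-feasible for $\widehat{S}$, and a minimisation over a larger feasible set can only decrease the optimum. The third assertion I would deduce at the end by combining the first two with multiplicativity, so the real content is the second assertion, the behaviour under an induced subgraph $S' = U^\dagger S U$, which I take up first.

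For the induced-subgraph bound I would work entirely in the dual. Let $Y$ be an optimal dual point for $S$, so $Y \in S \ox \mathcal{L}(A')$, $Y \geq \Phi$, and $\|\tr_A Y\| = \cvartheta(S)$, where $\Phi$ is built from $\ket{\Phi} = \sum_i \ket{i}^A\ket{i}^{A'}$. Writing $\overline{U}$ for the entrywise complex conjugate of $U$ (acting on the reference copy $A_0' \hookrightarrow A'$), I propose the candidate
\[
  Y' := (U \ox \overline{U})^\dagger\, Y\, (U \ox \overline{U}),
\]
and would check the three dual constraints for $S'$. Membership $Y' \in S' \ox \mathcal{L}(A_0')$ is clear, since conjugating the $A$-factor by $U$ sends each $F_\alpha \in S$ to $U^\dagger F_\alpha U \in U^\dagger S U = S'$. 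Positivity $Y' \geq \Phi_{A_0}$ follows from $Y \geq \Phi$ once one verifies the key identity $(U \ox \overline{U})^\dagger \ket{\Phi} = \ket{\Phi_{A_0}}$, which is the transpose (``ricochet'') trick: expanding in coordinates, the overlaps collapse via $(U^\dagger U)_{kk'} = \delta_{kk'}$ because $U$ is an isometry. Finally, for the objective one computes $\tr_{A_0} Y' = U^\top B\, \overline{U}$, where $B = \sum_k (\bra{Uk}\ox\1)\,Y\,(\ket{Uk}\ox\1)$ runs over the orthonormal image vectors $\ket{Uk} = U\ket{k}$; since $Y \geq 0$, completing $\{\ket{Uk}\}$ to a basis of $A$ shows $0 \leq B \leq \tr_A Y$, whence $\|\tr_{A_0} Y'\| = \|U^\top B\, \overline{U}\| \leq \|B\| \leq \|\tr_A Y\|$. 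Thus $Y'$ is dual-feasible with objective at most $\cvartheta(S)$, giving $\cvartheta(S') \leq \cvartheta(S)$.

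With both monotonicities in hand, I would assemble the third assertion from the structure theory of Section~\ref{sec:q}. Post-processing $\mathcal{N}$ by a channel $\mathcal{R}$ enlarges the graph, $S < \widehat{S}$, so the first part gives $\cvartheta(\widehat{S}) \leq \cvartheta(S)$. Pre-processing by $\mathcal{T}$ produces, for a Stinespring isometry $U:A' \hookrightarrow A \ox D$, the induced subgraph $S' = U^\dagger (S \ox \mathcal{L}(D)) U$; applying the second part (with $S$ replaced by $S \ox \mathcal{L}(D)$ and $A$ by $A \ox D$) gives $\cvartheta(S') \leq \cvartheta(S \ox \mathcal{L}(D))$, and then multiplicativity (Corollary~\ref{cor:c-theta-multi}) together with $\cvartheta(\mathcal{L}(D)) = 1$ — the complete graph has trivial orthocomplement — yields $\cvartheta(S \ox \mathcal{L}(D)) = \cvartheta(S)$, hence $\cvartheta(S') \leq \cvartheta(S)$.

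I expect the only delicate step to be the positivity constraint in the induced-subgraph argument: taking the conjugate $\overline{U}$ (rather than $U$) on the reference factor is exactly what makes $(U \ox \overline{U})^\dagger$ fix the maximally entangled vector, and getting this transpose bookkeeping right — so that $\Phi$ is sent to $\Phi_{A_0}$ and not to some other operator — is the crux. The membership and norm estimates are then routine, the latter relying only on $Y \geq 0$ to compare $B$ with the full partial trace.
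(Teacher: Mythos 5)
Your proposal is correct and follows essentially the same route as the paper: subgraph monotonicity for the post-processing part, compression of an optimal dual solution $Y$ of Theorem~\ref{thm:c-theta-dual-SDP} by the isometry for induced subgraphs, and reduction of pre-processing to an induced subgraph of $S \ox \mathcal{L}(D)$. The only differences are cosmetic refinements: you conjugate by $U \ox \overline{U}$ where the paper uses $U \ox \1$ (yours is in fact the cleaner bookkeeping, since it sends $\Phi_{AA'}$ exactly to $\Phi_{A_0A_0'}$ on the correctly sized reference system), and you invoke multiplicativity together with $\cvartheta(\mathcal{L}(D))=1$ where the paper appeals directly to the definition of $\cvartheta$ as a supremum over tensor products with full matrix algebras.
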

\begin{proof}
  For a larger graph $\widehat{S} > S$,
  we know already $\cvartheta(\widehat{S}) \leq \cvartheta(S)$; and
  post-processing gives exactly rise to a larger graph $\widehat{S} > S$.
  
  Induced subgraphs are more interesting: 
  Let $Y \in S \ox \mathcal{L}(A')$ be an optimal solution of the dual semidefinite
  programme for $\cvartheta(S)$, according to Theorem~\ref{thm:c-theta-dual-SDP},
  i.e.~$Y \geq \Phi_{AA'}$ and $\| \tr_A Y \| = \cvartheta(S)$.
  But then $Y' = (U^\dagger\ox\1) Y (U\ox\1) \in S'\ox\mathcal{L}(A')$,
  $Y' \geq \Phi_{A_0A_0'}$ and $\| \tr_{A_0} Y' \| \leq \| \tr_A Y \|$.
  
  Finally, any graph $S' < \mathcal{L}(A')$ originating from a pre-processing
  of $\mathcal{N}$ is obtained from an isometry $U:A' \hookrightarrow A \ox D$, 
  via $S' = U^\dagger (S\ox\mathcal{L}(D)) U$. But $S$ and $S \ox \mathcal{L}(D)$
  have the same $\cvartheta$, by definition, and since $S'$ is an induced subgraph
  of the latter, we are done.
\end{proof}

\medskip
We end this section by remarking that the dual in Theorem~\ref{thm:c-theta-dual-SDP}
simplifies considerably in the case of classical channels,
i.e.~$S = \operatorname{span}\{ \ketbra{x}{x'} : x=x' \text{ or } x\sim x' \} < \mathcal{L}(\CC X)$.
Note that $\ket{\Phi} \in A \ox A'$ is invariant under unitaries
of the form $U\ox\overline{U}$, and that $S$ is stabilised by diagonal
unitaries $Z = \sum_x e^{i\varphi_x} \proj{x}$. Hence, with
every dual feasible $Y$, we get an equally good dual feasible
solution $(Z\ox\overline{Z}) Y (Z\ox\overline{Z})^\dagger$, so
by the triangle inequality, we can find a dual optimal solution
among the operators invariant under conjugation with $Z\ox\overline{Z}$,
i.e.~$Y = \sum_{xx'} Y_{xx'} \ketbra{xx}{x'x'}$. The constraints
are $Y \geq \Phi = \sum_{xx'} \ketbra{xx}{x'x'}$ and $Y_{xx'} = 0$
if $x \not\sim x'$, while the objective function is the
norm of the partial trace $\tr_A Y = \sum_x Y_{xx} \proj{x}$.
Thus, we arrive at
\begin{corollary}
  \label{cor:G-dual}
  For a classical graph $G$, Lov\'{a}sz'
  $\vartheta$ is given by the semidefinite programme
  \[
    \vartheta(G) = \min \left\{ \max_{x\in X} Y_{xx} : Y \in S,\ Y \geq J \right\},
  \]
  where $S$ is the non-commutative graph associated to $G$,
  meaning $Y_{xx'}=0$ whenever $x \not\sim x'$, and $J$ is the all-$1$ matrix.
  \qed
\end{corollary}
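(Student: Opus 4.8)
The plan is to start from the dual semidefinite programme of Theorem~\ref{thm:c-theta-dual-SDP}, namely $\cvartheta(S) = \min\{\|\tr_A Y\| : Y\in S\ox\mathcal{L}(A'),\ Y\geq\Phi\}$ with the unnormalised $\ket{\Phi}=\sum_x\ket{x}^A\ket{x}^{A'}$, and to exploit the symmetry of the classical instance to collapse this optimisation onto the diagonal. Since we have already recorded that $\cvartheta(S)=\vartheta(S)=\vartheta(G)$ for the non-commutative graph $S$ of a classical $G$, it suffices to bring the minimisation into the asserted form; its value is then $\vartheta(G)$ automatically.

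First I would identify the symmetry group acting on dual-feasible solutions. For a diagonal unitary $Z=\sum_x e^{i\varphi_x}\proj{x}$ the conjugation $Y\mapsto (Z\ox\overline{Z})Y(Z\ox\overline{Z})^\dagger$ preserves every constraint: it fixes $\Phi$ (because $\ket{\Phi}$ is invariant under $U\ox\overline{U}$), it preserves membership in $S\ox\mathcal{L}(A')$ because $ZSZ^\dagger=S$ (the classical $S$ is stabilised by diagonal unitaries), and it leaves $\|\tr_A Y\|$ unchanged, since $\tr_A\bigl[(Z\ox\overline{Z})Y(Z\ox\overline{Z})^\dagger\bigr]=\overline{Z}(\tr_A Y)\overline{Z}^\dagger$, a unitary conjugation. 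As both the feasible set and the objective are convex, averaging a feasible $Y$ over the compact group of phases produces a feasible $\overline{Y}$ of no larger objective that is invariant under every $Z\ox\overline{Z}$.

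Next I would characterise the invariant operators. A short phase-counting argument shows that $Z\ox\overline{Z}$-invariance forces $\overline{Y}$ to be supported on the ``coherent'' family $\ketbra{xx}{x'x'}$ together with genuinely diagonal terms $\proj{ab}$ for $a\neq b$; the latter are positive, orthogonal to the support of $\Phi$, and add only nonnegative amounts to the diagonal of $\tr_A Y$, so at the optimum they may be set to zero. This leaves $\overline{Y}=\sum_{xx'}Y_{xx'}\ketbra{xx}{x'x'}$. Restricting to the coherent subspace via the isometry $\ket{xx}\leftrightarrow\ket{x}$, the operator $\overline{Y}$ becomes the matrix $(Y_{xx'})\in\mathcal{L}(\CC X)$, the constraint $\overline{Y}\in S\ox\mathcal{L}(A')$ becomes $Y_{xx'}=0$ whenever $x\not\sim x'$ (i.e.~$Y\in S$), the constraint $\overline{Y}\geq\Phi$ becomes $Y\geq J$ (since $\Phi\leftrightarrow J$), and the objective $\|\tr_A\overline{Y}\|=\bigl\|\sum_x Y_{xx}\proj{x}\bigr\|$ becomes $\max_x Y_{xx}$. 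Combined with $\cvartheta(S)=\vartheta(G)$, this yields the stated programme.

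The main obstacle, I expect, is the bookkeeping in this last reduction: justifying rigorously that the genuinely diagonal invariant components can be discarded without leaving $S\ox\mathcal{L}(A')$ or raising the objective, and correctly matching the unnormalised $\Phi=\sum_{xx'}\ketbra{xx}{x'x'}$ with the all-$1$ matrix $J$ under the isometry. Everything else is the standard group-averaging trick, directly analogous to Lov\'{a}sz's original diagonalisation of his dual.
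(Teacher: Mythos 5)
Your proposal is correct and follows essentially the same route as the paper: average a dual-feasible $Y$ from Theorem~\ref{thm:c-theta-dual-SDP} over the diagonal unitaries $Z\ox\overline{Z}$, restrict to the invariant operators $\sum_{xx'}Y_{xx'}\ketbra{xx}{x'x'}$, and read off the constraints $Y\geq J$, $Y\in S$ and objective $\max_x Y_{xx}$. Your extra step of identifying and discarding the remaining invariant components $\proj{ab}$, $a\neq b$, is a point the paper passes over silently, and you handle it correctly.
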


\section{Applications and discussion}
\label{sec:end}
There are a few immediate consequences,
the most obviously important being
obtained by putting together Lemma~\ref{lemma:tilde-theta-alpha}
and Corollary~\ref{cor:c-theta-multi}:
\begin{corollary}
  \label{cor:C_E-upper}
  For any non-commutative graph $S < \mathcal{L}(A)$,
  $C_{0E}(S) \leq \log\cvartheta(S)$.
  %
  \qed
\end{corollary}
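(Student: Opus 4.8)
The plan is to combine the single-shot bound of Lemma~\ref{lemma:tilde-theta-alpha} with the \emph{exact} multiplicativity of $\cvartheta$ established in Corollary~\ref{cor:c-theta-multi}, and then pass to the regularised limit defining $C_{0E}$. First I would apply Lemma~\ref{lemma:tilde-theta-alpha} not to $S$ itself but to each tensor power $S^{\ox n}$, which is again a non-commutative graph (it contains $\1$ and satisfies $(S^{\ox n})^\dagger = S^{\ox n}$). This gives $\calpha(S^{\ox n}) \leq \cvartheta(S^{\ox n})$ for every integer $n$.

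Next I would invoke Corollary~\ref{cor:c-theta-multi} iteratively to collapse the right-hand side: multiplicativity yields $\cvartheta(S^{\ox n}) = \cvartheta(S)^n$. Combining the two facts gives $\calpha(S^{\ox n}) \leq \cvartheta(S)^n$, hence $\frac{1}{n}\log\calpha(S^{\ox n}) \leq \log\cvartheta(S)$ uniformly in $n$. Finally I would take $n\rar\infty$ in the definition $C_{0E}(S) = \lim_{n\rar\infty}\frac{1}{n}\log\calpha(S^{\ox n})$; since the per-letter bound is independent of $n$, the limit is bounded by $\log\cvartheta(S)$, which is the claim.

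There is essentially no obstacle here, as the corollary is advertised as an immediate consequence of the two cited results; the proof is a two-line regularisation argument. The only point worth emphasising is that the argument genuinely requires the \emph{equality} in Corollary~\ref{cor:c-theta-multi} and not merely the supermultiplicativity of Lemma~\ref{lemma:tilde-theta-supermult}: supermultiplicativity would bound $\cvartheta(S^{\ox n})$ from below and thus point the inequality the wrong way. It is precisely the submultiplicativity half, read off from the dual semidefinite programme of Theorem~\ref{thm:c-theta-dual-SDP}, that makes $\log\cvartheta(S)$ a valid single-letter upper bound on the asymptotic capacity.
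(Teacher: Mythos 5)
Your argument is exactly the one the paper intends: the corollary is stated as an immediate consequence of Lemma~\ref{lemma:tilde-theta-alpha} applied to $S^{\ox n}$ together with the multiplicativity of Corollary~\ref{cor:c-theta-multi}, which is precisely your two-step regularisation. Your closing observation that it is the submultiplicative half (from the dual SDP) that does the real work is correct and well taken.
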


\medskip
Then, for a classical channel with confusability
graph $G$, we observed earlier that $\cvartheta(S) = \vartheta(G)$.
Hence, $\calpha(G) \leq \vartheta(G)$ and so:
\begin{corollary}
  \label{cor:C0-E-theta}
  For any graph $G$,
  $C_{0E}(G) \leq \log\vartheta(G)$.
  \qed
\end{corollary}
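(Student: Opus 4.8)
The plan is to obtain this as an immediate specialisation of Corollary~\ref{cor:C_E-upper} to the non-commutative graph $S$ representing $G$. First I would recall the embedding of the classical picture into the quantum one from section~\ref{sec:q}: the confusability graph $G$ corresponds to the operator space $S = \operatorname{span}\{\ketbra{x}{x'} : x=x' \text{ or } x\sim x'\} < \mathcal{L}(\CC X)$, and the classical entanglement-assisted zero-error capacity $C_{0E}(G)$ is by definition just $C_{0E}(S)$ for this $S$, since a classical channel realising $G$ is a particular quantum channel whose associated operator space is exactly $S$, and entanglement-assisted classical codes are the special case of the codes counted by $\calpha(S^{\ox n})$.

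With this identification, the inequality to prove becomes $C_{0E}(S) \leq \log\vartheta(G)$. Next I would apply Corollary~\ref{cor:C_E-upper}, which supplies $C_{0E}(S) \leq \log\cvartheta(S)$ for every non-commutative graph. It then remains only to evaluate $\cvartheta(S)$ on the classical $S$, and here I would invoke the identity $\cvartheta(S) = \vartheta(G)$ already observed after Definition~\ref{defi:c-theta} (and made rigorous through the reduction of the dual semidefinite programme to Lov\'asz's own SDP in Corollary~\ref{cor:G-dual}). Substituting gives $C_{0E}(G) = C_{0E}(S) \leq \log\cvartheta(S) = \log\vartheta(G)$, which is the claim.

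There is essentially no hard step: all the work has been deposited upstream, in Corollary~\ref{cor:C_E-upper} (which itself combines the single-shot bound $\calpha \leq \cvartheta$ of Lemma~\ref{lemma:tilde-theta-alpha} with the multiplicativity of $\cvartheta$ from Corollary~\ref{cor:c-theta-multi}) and in the collapse $\cvartheta(S) = \vartheta(G)$ for classical graphs. The only point that deserves a sentence of care is the identification $C_{0E}(G) = C_{0E}(S)$, namely checking that passing from the classical graph to its non-commutative counterpart loses nothing: one must confirm that an optimal entanglement-assisted classical zero-error code for $n$ channel uses is exactly an entanglement-assisted independent set for $S^{\ox n}$ as used in the definition of $\calpha(S^{\ox n})$. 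Since the classical channel embeds as a quantum channel with operator space precisely $S^{\ox n}$, I expect this to cost only a remark rather than real work.
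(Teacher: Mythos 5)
Your proposal is correct and follows exactly the paper's own route: it specialises Corollary~\ref{cor:C_E-upper} to the classical operator space $S$ and invokes the identity $\cvartheta(S)=\vartheta(G)$ noted after Definition~\ref{defi:c-theta}. Your extra remark on verifying $C_{0E}(G)=C_{0E}(S)$ is a reasonable point of care that the paper treats as implicit in its embedding of classical channels.
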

This answers an open question from~\cite{CLMW}, which
is nontrivial because there it is shown that $\calpha(G)$
may be strictly larger than $\alpha(G)$.

E.g., we can now compute the entanglement-assisted zero-error capacity of
the ``Bell-Kochen-Specker'' channels discussed in~\cite{CLMW}.
These are all disjoint unions of $n$ copies of
$K_d$, with some extra edges between the complete components,
such that $G$ is exactly the orthogonality graph of a set
of $nd$ vectors in $\CC^d$. If the set of vectors gives rise to
a Kochen-Specker proof of non-contextuality, this means
$\alpha(G) \leq n-1$. On the other hand, in~\cite{CLMW} it is
shown that $\calpha(G) \geq n$, using a rank-$d$ maximally
entangled state $\frac{1}{d}\Phi_d$. Here, we can now see
$n \leq \calpha(G) \leq \vartheta(G) \leq n$, as shown by the
dual feasible solution $Y = n \bigoplus \Phi_d$, which has
$\| \tr_A Y \| = n$. Thus, $\calpha(G) = \vartheta(G) = n$
and we also learn that $C_{0E}(G) = \log n$.
(One could, however, see this also directly by noting that
$G$ contains a disjoint union of $n$ complete graphs as a
subgraph.)

While we do not have a separating upper bound for the unassisted
capacity $C_0(G)$ of these graphs, of course even
as a bound on the independence number, our Corollary~\ref{cor:C_E-upper}
is an improvement over Lov\'{a}sz~\cite{Lovasz}, since we
find that $\vartheta(G)$ is even larger or equal than $\calpha$.
In this sense, the increase of independence number from
$\alpha$ to $\calpha$ due to entanglement-assistance
somehow ``explains'' the fact that Lov\'{a}sz' $\vartheta$ 
is not always a tight bound~\cite{Haemers} -- and in fact, it is quite 
possible that $C_{0E}(G)$ can be strictly larger than $C_0(G)$. 

There are, furthermore,
other quantum channels for which $\cvartheta(S) = \vartheta(S)$.
For instance, perhaps the simplest one is $S=\Delta^\perp$, where
$S^\perp = \CC\Delta$ (with a traceless Hermitian operator $\Delta$)
is one-dimensional. In that case one has evidently 
$\calpha_U(S) = \calpha(S) = \Galpha(S) = 2$
(see Proposition~\ref{prop:elementary-Watson}).
In fact, the gap between $\Galpha(S)$ and $\cvartheta(S)$
can be made arbitrarily large, since the latter can be up to $d$
as shown by the example of
\[
  \Delta = \left[ \begin{array}{cccc}
                     d-1 &    &        &   \\
                         & -1 &        &   \\
                         &    & \ddots &   \\
                         &    &        & -1
                  \end{array}\right],
\]
and in fact similar examples show that every real value between $2$ and $d$ is
realised as some $\cvartheta(\Delta^\perp)$. 
We do not know a better upper bound on $C_{0E}(S)$ for this channel
other than $\log\cvartheta(S)$.

%

\medskip\noindent
Non-commutative graphs for which we can
determine $C_{0E}$ include all $S < \mathcal{L}(\CC^2)$:
\begin{itemize}
  \item If $S = \CC\1$, then the channel is perfect, and by superdense
    coding we can achieve $\calpha(S) = 4 = \cvartheta(S)$,
    hence $C_{0E}(S) = 2$.
  \item The other extreme is $S=\mathcal{L}(\CC^2)$, then $\cvartheta(S)=1$
    and hence $C_{0E}=0$.
  \item In the intermediate case, $2 \leq \dim S \leq 3$, and we claim
    $C_{0E}(S) = 1$. Indeed, the capacity is largest for the smallest
    subspace, hence we consider only $\dim S = 2$. The subspace is spanned
    by $\1$ and another operator, which we may take to be diagonal
    and traceless, thus w.l.o.g.~$Z$. This is the subspace corresponding
    to the noiseless classical (i.e.~$Z$-dephasing) channel
    $\mathcal{N}(\rho) = \sum_{b=0,1} \proj{b} \rho \proj{b}$,
    which clearly has entanglement-assisted capacity $1$, even
    in the Shannon setting~\cite{C_E}, which can be achieved
    error-free and without entanglement since $\alpha(S)=2$, $C_0(S)=1$.
    For $\dim S = 3$, we still have $\calpha(S) = \calpha_U(S) = 2$, by 
    Proposition \ref{prop:elementary-Watson}.
\end{itemize}

Yet another one can be found in~\cite[Thm.~3, eq.~(8)]{Duan-0}, where 
a channel is constructed with non-commutative
graph $S=\1_2\ox\1_d + \1^\perp\ox\mathcal{L}(\CC^d)$, so that
$S^\perp = \1_2\ox\1_d^\perp$. It was shown that $\calpha(S) \geq d^2$,
and indeed, because $S$ contains $\mathcal{L}(\CC^2)\ox\1_d$,
$\cvartheta(S) \leq \cvartheta(\mathcal{L}(\CC^2)\ox\1_d) = d^2$,
hence $C_{0E}(S) = 2\log d$.

Perhaps the most interesting open question regarding the 
entanglement-assisted zero-error capacity is whether
$C_{0E}(S) = \log \cvartheta(S)$. 
Note that this would imply that $C_{0E}$ is 
multiplicative (whereas $C_0$ is not~\cite{Alon});
one might recall that entanglement-assistance has made also
the theory of communication via quantum channels more elegant~\cite{C_E},
and likewise so-called \emph{XOR games}~\cite{XOR-product},
for which a semidefinite characterisation lead to multiplicativity
of the optimal winning probability.
A most challenging test case is presented by the above non-commutative
graphs $S = \Delta^\perp$, for which we do not even know 
$\calpha(S\ox S)$ at the time of writing, nor in fact $\Galpha(S\ox S)$.
Does it perhaps hold that 
$\widehat{C}_{0E}(S) \leq \log\bigl( 1+\dim S^\perp \bigr)$ in general?
-- which by the above examples would imply a separation between $\log\cvartheta(S)$
and $C_{0E}(S)$.

Another question pertains to a possible generalisation of a property
of Lov\'{a}sz' $\vartheta(G)$: Is it true that
$\cvartheta(S_1\cap S_2) \leq \cvartheta(S_1)\cvartheta(S_2)$?
Note that it holds for classical graphs -- because the intersection is
an induced subgraph of the strong product along the diagonal --, and
that it would be an extension of the multiplicativity statement.

Third, it is a bit unsatisfactory that we have three entanglement-assisted
independence numbers. Are they really different?
Is it perhaps true that at least they lead to the same asymptotic capacities?
What is in general the relation between $\Galpha(S)$ and $\cvartheta(S)$?

Finally, looking back at our path, it may seem odd and in
fact a bit arbitrary that we arrived at a Lov\'{a}sz type bound on
the entanglement-assisted independence number. Do there
exist similar bounds for the unassisted zero-error
capacity and the zero-error quantum capacity that are strictly
better than $\cvartheta(S)$?

\section{Non-commutative graph theory?}
\label{sec:q-graphs}
In this last section, no longer concerned with zero-error communication
but driven by the idea of developing a proper theory of non-commutative
graphs, we will finally give the proper definition of graphs, of
subgraphs, induced substructures, etc.
For this purpose, we have to come back to the characterisation of
$S$ in terms of the adjoint $\widehat{\mathcal{N}}^*$ of the complementary
channel (Lemma~\ref{lemma:S-char}).
Such maps, by being completely positive and unital,
obey the Kadison-Schwarz (operator) inequality
\[
  \widehat{\mathcal{N}}^*(X)^\dagger \widehat{\mathcal{N}}^*(X) \leq \widehat{\mathcal{N}}^*(X^\dagger X),
\]
for all $X\in \mathcal{L}(C)$. The set of operators which satisfy this
with equality is, by Choi's theorem~\cite{choi:multi,Paulsen-book}, 
the so-called \emph{multiplicative domain}
\begin{equation}
  \label{eq:multi-domain}
  \mathcal{M} := \bigl\{ X\in \mathcal{L}(C) \text{ s.t. } \forall Y\ 
        \widehat{\mathcal{N}}^*(X)\widehat{\mathcal{N}}^*(Y) = \widehat{\mathcal{N}}^*(XY) \bigr\},
\end{equation}
which is in fact a $*$-subalgebra (containing $\1$)
of $\mathcal{L}(C)$, and restricted to it,
$\widehat{\mathcal{N}}^*: \mathcal{M} \rightarrow S_0 := \widehat{\mathcal{N}}^*(\mathcal{M})$
is a $*$-algebra homomorphism. The image $S_0$ is clearly a subspace of $S$,
a $*$-algebra itself, and by eq.~(\ref{eq:multi-domain}) it satisfies
\[
  S_0 S = S S_0 = S,
\]
i.e., $S$ is a (left and right) $S_0$-module, all presented explicitly as
operator subspaces of $\mathcal{L}(A)$. In fact, it is even a so-called
\emph{Hilbert-$S_0$-module}~\cite{Lance}; all we need is to choose an
$S_0$-valued inner product $\langle \cdot,\cdot \rangle: S\times S \rar S_0$,
which we shall however always assume to be defined on $\mathcal{L}(A)$.
The inner product should be linear in the first, and conjugate linear in the second
element, $\langle X,Y \rangle^* = \langle Y,X \rangle$,
it should respect the module structure (from the right)
as $\langle X,Ya \rangle = \langle X,Y \rangle a$ for $X,Y \in S$ and $a\in S_0$
(which is equivalent to $\langle Xa,Y \rangle = a^\dagger \langle X,Y \rangle$),
and $\langle X,X \rangle \geq 0$ with equality iff $X=0$. This defines
a very strong notion of orthogonality in $\mathcal{L}(A)$.

%

In the first part of the paper, we effectively
treated every non-commutative graph as if it had trivial
$S_0=\CC\1$. In this case, there is a whole family of inner products
$\langle X,Y \rangle = (\tr X^\dagger R Y S) \1$ 
for some positive definite $0 < R,S \in\mathcal{L}(A)_{\text{sa}}$,
but there are many more.
Because of its importance for the independent set question discussed
above, and its relation to matrix multiplication, we assign special
status to the Hilbert-Schmidt inner product (i.e.~$R=S=\1$), 
which shall be the default when no inner product is specified.

To obtain some more structure, note that since, $S_0 < S$ it is
reasonable to demand that $\langle \1,Y \rangle = Y$ and
$\langle X,\1 \rangle = X^\dagger$ for $X,Y\in S_0$ (which is
equivalent to asking $\langle X,Y \rangle = X^\dagger Y$).
Motivated by this, and using also the left module structure,
we could ask for the even stronger property
$\langle X,aY \rangle = \langle a^\dagger X,Y \rangle$ for $a \in S_0$
(together with $\langle \1,\1 \rangle = \1$).

In general, the structure theorem for finite dimensional
$*$-algebras implies
\[
  S_0 = \bigoplus_{j=1}^r \mathcal{L}(A_j) \ox \1_{Z_j},
  \quad \text{with} \quad
  A = \bigoplus_{j=1}^r A_j \ox Z_j,
\]
while $\langle \cdot,\cdot \rangle$ gives rise to a conditional
expectation $E(X) = \langle \1,X \rangle$ (satisfying
$E(XA) = E(X)A$ for $X\in S$ and $A\in S_0$ and $E(A) = A$, by the above
additional assumption).
The general form of the conditional expectation is
\[
  E(X) = \bigoplus_{j=1}^r 
            \bigl[ \tr_{Z_j} (P_j\ox\sqrt{\zeta_j})X(P_j\ox\sqrt{\zeta_j}) \bigr] \ox \1_{Z_j},
\]
with the projectors $P_j=\1_{A_j}$ onto $A_j$ and $Q_j=\1_{Z_j}$ 
onto $Z_j$, and states $\zeta_j \in \mathcal{S}(Z_j)$. 
For the conditional expectation to be \emph{faithful},
i.e.~$X\geq 0$ and $E(X)=0$ implying $X=0$, it is necessary and
sufficient that all the $\zeta_j$ are faithful.

Now, from the left and right module structure,
\begin{equation}\begin{split}
  \label{eq:S-structure}
  S                           = \bigoplus_{j,k=1}^r &(P_j\ox Q_j) S (P_k\ox Q_k),\ 
                                                        \text{ and for each } j,\ k, \\
  &(P_j\ox Q_j) S (P_k\ox Q_k) = (\mathcal{L}(A_j)\ox Q_j) S (\mathcal{L}(A_k)\ox Q_k) \\
  &\phantom{(P_j\ox Q_j) S (P_k\ox Q_k)}         = \mathcal{L}(A_k\rar A_j) \ox S_{jk},
\end{split}\end{equation}
where $S_{jk} < \mathcal{L}(Z_k\rar Z_j)$ such that $S_{kj} = S_{jk}^\dagger$
and $Q_j \in S_{jj}$. From this we see that each non-commutative graph gives
rise to an underlying classical graph ``skeleton''
\[
  G(S_0 < S) := \bigl( V=[r], E={jk : S_{jk} \neq 0} \bigr).
\]

A general inner product is not uniquely defined by its conditional
expectation, but each conditional expectation $E$ gives rise to the
following canonical inner product
\begin{equation}
  \label{eq:E-inner-prod}
  \langle X,Y \rangle_E = E(X^\dagger Y)
                        = \bigoplus_{j=1}^r 
            \bigl[ \tr_{Z_j} (P_j\ox\sqrt{\zeta_j})X^\dagger Y(P_j\ox\sqrt{\zeta_j}) \bigr] \ox \1_{Z_j}.
\end{equation}
As before for the
Hilbert-Schmidt inner product, the tracial states $\zeta_j = \frac{1}{|Z_j|}\1_{Z_j}$
are distinguished because of the symmetry of the resulting inner product
\begin{equation*}
  E(UXU^\dagger) = U E(X) U^\dagger,\ \text{ for unitaries } U \text{ s.t. } US_0U^\dagger = S_0,
\end{equation*}
which characterises them uniquely.
(And hence its relation to the usual matrix product.) This choice is 
understood as the default if we only specify $S_0$ but not an inner product.

\medskip
We did not need all this additional
structure before, but it motivates our eventual definition:

\begin{definition}
  \label{def:q-graph}
  A \emph{non-commutative graph} is a pair $S_0 < S$ of operator
  subspaces of some $\mathcal{L}(A)$, with a complex Hilbert space $A$,
  equipped with an inner product $\langle X,Y \rangle$
  that makes $S$ a Hilbert left and right $S_0$-module.
  
  That is, $S_0$ is a $*$-subalgebra of $\mathcal{L}(A)$ containing
  $\1$ and contained in $S$,
  $S = S^\dagger$ and $S$ is a left and right $S_0$-module with respect to
  matrix multiplication, i.e.~$S_0 S = S S_0 = S$. 
  The inner product satisfies $\langle X,Y \rangle^* = \langle Y,X \rangle$
  for all $X,Y \in \mathcal{L}(A)$, $\langle \1,\1 \rangle = \1$,
  $\langle X,Ya \rangle = \langle X,Y \rangle a$ for $X,Y \in S$ and $a\in S_0$
  (which is equivalent to $\langle Xa,Y \rangle = a^\dagger \langle X,Y \rangle$),
  $\langle X,aY \rangle = \langle a^\dagger X,Y \rangle$, for $a\in S_0$,
  and $\langle X,X \rangle \geq 0$, with equality iff $X=0$.

  In fact, with the conditional expectation $E(X) = \langle \1,X \rangle$
  from $\mathcal{L}(A)$ to $S_0$, we shall only look at inner products
  of the form $\langle X,Y \rangle = E(X^\dagger Y)$.
  
  To emphasise the dependence on $S_0$ and $E$, we shall call $S$
  a \emph{(non-commutative) $S_0$-graph} (if we don't specify the
  inner product), or more precisely an \emph{$E$-graph}.
\end{definition}
We do not have a sufficient overview over the literature to claim that this
concept is entirely new and unexplored. The term -- apart from a single occurrence 
in the context of non-commutative geometry~\cite{Filk} -- appears not to
have been used before. And while there is some literature regarding
finitely generated Hilbert-modules over finite-dimensional $*$-algebras,
to the best of our knowledge no-one seems to ever have made the connection 
to graph theory.

\medskip
\begin{remark}
Abstractly, there seems no reason to insist on $\mathcal{N}$ being
trace preserving, which would correspond to $S$ not necessarily containing
the identity matrix $\1$. In the first part of the paper we could indeed
have relaxed the definition of non-commutative graph to be an
operator space $S=S^\dagger < \mathcal{L}(A)$ containing some
positive definite element $D > 0$. 

However, the above concepts do not go well with this generalisations,
as for non-unital $\widehat{N}^*$ we do not have unital $*$-subalgebra
structure of $S_0$, nor is it characterised by Choi's theorem. 
Thus we stick with our original definition for now,
leaving an exploration of alternative definitions for later.
\end{remark}

\medskip
Clearly, the same operator subspace $S$ can originate from different
channels, which might however have different $S_0$. All pairs $S_0 < S$
according to the above definition occur, however. 
The $*$-subalgebra $S_0$ serves as a kind of ``diagonal'' in the
operator space $S$, in fact, whereas $S$ generalises the edges of
a graph, $S_0$ is representative of the vertices (their number
being remembered in the dimension $|A|$ of the underlying Hilbert space).
It is perhaps helpful to remember, for the sake of intuition, to
recall one of the original motivations to consider Hilbert 
modules~\cite{Kaplansky} as an abstract version of vector bundles
over manifolds, represented as the module of vector fields over the
algebra of continuous functions, where the inner product originates
from a Riemannian structure of the vector bundle; this intuition
has been immensely fruitful in the creation of non-commutative geometry
and its applications~\cite{Connes}.

The basic example of course is once more the classical graph: we saw
before that starting from a noisy channel, one can arrive at the
confusability graph in its non-commutative guise
\[
  S = \operatorname{span}\{ \ketbra{x}{x'} : x=x' \text{ or } x\sim x' \}
    < \mathcal{L}(\CC X),
\]
but that made no distinction between vertices ($x=x'$) and proper
edges. Looking at the quantum version of the channel as discussed
in section~\ref{sec:q}, one can see that $S_0$ will contain all $\proj{x}$.
By appropriately modifying the channel $N$, for instance by
considering $N' = \frac{1}{2}N \oplus \frac{1}{2}\id_X$ (with the
same input alphabet $X$ and the larger output alphabet $X \stackrel{\cdot}{\cup} Y$),
one can indeed enforce $S_0 = \operatorname{span}\{ \proj{x} : x \in X \}$,
with the canonical conditional expectation $\operatorname{diag}(X) = \sum_x \proj{x} X \proj{x}$.
Now, it is clear that one can recover the graph $G$ from $S_0 < S$
up to isomorphism. Thus, the classical graphs are precisely
the $\operatorname{diag}$-graphs, the graph structure recovered
precisely as the skeleton $G(S_0<S)$ of the algebraic data.

Furthermore, the module $S$ over $\operatorname{diag}$ is generated
by a single element (using left and right multiplication), for
instance by the Laplacian of the graph. This property is shared
by all non-commutative graphs where in eq.~(\ref{eq:S-structure}),
$S_{jk}$ is at most one-dimensional for all $j$, $k$.

A class of examples that are already more ``quantum'' are graphs
$S < \mathcal{L}(A)$,
with $S_0 = \CC\1$ and a conditional expectation of the form 
$E_\rho(X) = \tr(\rho X)\1$ for a state $\rho \in \mathcal{S}(A)$.
Such a non-commutative graph we call \emph{$\rho$-graph}, and all
we require for it is $\1\in S=S^\dagger$.

\medskip
Going back once more to the motivation of our concepts from channels,
one may recall that each classical channel $N:X\rar Y$ also gives rise to a
\emph{bipartite} graph with vertex set $X \stackrel{\cup}{.} Y$,
where $x\in X$ and $y\in Y$ are connected by an edge iff $N(y|x) > 0$.
This bipartite graph captures much more about the channel than the
confusability graph, and indeed Shannon's zero-error feedback
result~\cite{Shannon56} and Cubitt \emph{et al.}'s regarding assistance
by non-signalling resources~\cite{CLMW} can be formulated in terms
of this bipartite graph. As its quantum version we propose to 
consider, for a quantum channel $\mathcal{N}:\mathcal{L}(A) \rar \mathcal{L}(B)$
with $\mathcal{N}(\rho) = \sum_j E_j \rho E_j^\dagger$, the
operator subspace
\begin{equation}
  \label{eq:bipartite-graph}
  Z := \operatorname{span}\{ E_j \} < \mathcal{L}(A\rar B).
\end{equation}
This space was crucial in the proof of Proposition~\ref{prop:elementary-Watson},
and it is evident that the non-commutative graph of the channel 
is obtained as $S = Z^\dagger Z < \mathcal{L}(A)$. Furthermore, one
can confirm that indeed $Z$ is still a right $S_0$-module, and that
the above conditional expectation $E$ makes it indeed a Hilbert
module, via the same rule $\langle X,Y \rangle = E(X^\dagger Y)$
for $X,Y \in Z$.

\begin{proof}
  This is essentially only an extension of Choi's reasoning~\cite{choi:multi}.
  We use the Stinespring representation of the channel, with isometry
  $V:A \hookrightarrow B \ox C$, such that 
  $\widehat{\mathcal{N}}^*(m) = V^\dagger (\1_B \ox m_C) V$.
  
  Then, a generic element of $Z$ can be written
  $X = (\1_B\ox\bra{\xi}_C)V = \sum_j \xi_j E_j$ for an appropriate
  vector $\ket{\xi} \in C$. A generic element of $S_0$ instead
  is $a = \widehat{\mathcal{N}}^*(m)$, for an element $m\in\mathcal{M} < \mathcal{L}(C)$
  of the multiplicative domain. We wish to show that $Xa \in Z$,
  and indeed we will find that
  \[
    Xa = (\1_B\ox\bra{\xi'}_C)V,\ \text{ with } \ket{\xi'} = m^\dagger \ket{\xi}.
  \]
  First, noting $Xa = (\1_B\ox\bra{\xi}_C)V V^\dagger (\1_B \ox m_C) V$, 
  Choi's theorem tells us
  \[\begin{split}
    a^\dagger X^\dagger X a &= V^\dagger (\1_B \ox m_C^\dagger) V
                                  V^\dagger (\1_B\ox\proj{\xi}_C) V
                              V^\dagger (\1_B \ox m_C) V                       \\
                            &= V^\dagger (\1_B\ox m^\dagger \proj{\xi}_C m) V
                             = V^\dagger (\1_B\ox\proj{\xi'}_C) V,
  \end{split}\]
  hence $Xa = (U^{X,a}_B\ox\bra{\xi'}_C) V$ 
  for some unitary $U^{X,a} \in \mathcal{U}(B)$.
  But for another $Y \in Z$, $b\in S_0$, once more by Choi's theorem,
  \[\begin{split}
    b^\dagger Y^\dagger X a &= V^\dagger (\1_B \ox n_C^\dagger) V
                                  V^\dagger (\1_B\ox\ketbra{\upsilon}{\xi}_C) V
                              V^\dagger (\1_B \ox m_C) V                       \\
                            &= V^\dagger (\1_B\ox n^\dagger \ketbra{\upsilon}{\xi}_C m) V,
  \end{split}\]
  showing $U^{X,a} = U^{Y,b}$ for all $X,Y$ and $a,b$, which
  concludes the proof.
\end{proof}

\medskip\noindent
This motivates the following definition, for which each cptp map
yields an example:
\begin{definition}
  \label{defi:bipartite-graph}
  A \emph{non-commutative (directed) bipartite graph} with
  ``vertex spaces'' $A$ and $B$ is a subspace 
  $Z < \mathcal{L}(A \rar B)$ together with a unital $*$-subalgebra
  $S_0 < S = Z^\dagger Z < \mathcal{L}(A)$, and a conditional
  expectation $E:\mathcal{L}(A) \rar S_0$, such that
  $Z$ is a right $S_0$-module, and indeed a Hilbert module
  for the inner product $\langle X,Y \rangle = E(X^\dagger Y)$.
\end{definition}
Again, all non-commutative bipartite graphs originate from some cptp channel.

\medskip
We call an $E$-graph $S < \mathcal{L}(A)$ and an $E'$-graph $S' < \mathcal{L}(A')$
\emph{isomorphic}, if there exists a unitary
isomorphism $U$ between $A$ and $A'$ such that
\[
  U S U^\dagger = S',\quad
  U S_0 U^\dagger = S_0',\quad\text{and}\quad
  U E(X) U^\dagger = E'(U X U^\dagger).
\]
This implies a definition of automorphism, too, and we denote the
automorphism group of $S$ as $\text{Aut}(S) < \mathcal{U}(A)$.
We say that the automorphism group \emph{acts (vertex) transitively}
if the only operators in the commutant of $S_0$ that also commute with
the automorphism group, are $\CC\1$.

\bigskip
Now we can start defining the usual graph notions: we call
a \emph{complete graph} a pair $S_0 < S = \mathcal{L}(A)$ together
with any faithful conditional expectation $E:\mathcal{L}(A) \rar S_0$.
To be precise, it is the complete $E$-graph.

The \emph{complement} of an $E$-graph $S$ is defined to be the subspace
\[
  S_0 < S^c := S_0 + S^{(\perp_E)} 
             = S_0 + \{ X\in\mathcal{L}(A) : \forall Y\in S\ \langle X,Y \rangle_E = 0 \},
\]
which by virtue of the Hilbert-module property is again an $E$-graph.
The definition is made in such a way that $(S^c)^c = S$ and $S \cap S^c = S_0$; 
in particular the complete $E$-graph is the complement of $S_0$
(which we call the \emph{empty} $E$-graph).
Note that the notion of complement depends on the conditional expectation
$E$ and its image $S_0$.

\medskip
Also \emph{graph products} are defined easily: and $E$-graph $S < \mathcal{L}(A)$
(with subalgebra $S_0$) and an $E'$-graph $S' < \mathcal{L}(A')$
(with subalgebra $S_0'$) give rise to the (strong) product, which
is the $E\ox E'$-graph $S\ox S' < \mathcal{L}(A\ox A')$, with
subalgebra $S_0 \ox S_0'$. Thus we also have the powers $S^{\ox n}$,
which are $E^{\ox n}$-graphs.

The \emph{disjoint union} of an $E$-graph $S < \mathcal{L}(A)$
and an $E'$-graph $S' < \mathcal{L}(A')$ is the direct sum
$S \oplus S' < \mathcal{L}(A\oplus A')$ (with subalgebra
$S_0 \oplus S_0'$). Denoting the projections onto $A$ and $A'$
in $A\oplus A'$ by $P$ and $P'=\1-P$, this is an $E\oplus E'$-graph,
where $(E\oplus E')(X) := E(PXP) \oplus E'(P'XP')$.
If the graphs originate from channels, their direct sum originates from
the direct sum channel. Note that the corresponding orthogonal sectors
in the direct sum are always perfectly distinguishable; one can
make them indistinguishable by adding the full operator sets
$\mathcal{L}(A\rar A')$ and $\mathcal{L}(A'\rar A)$ to the direct sum,
``filling up the off-diagonal blocks'':
\[
  S \cplus S' := S \oplus S' + \mathcal{L}(A\rar A') + \mathcal{L}(A'\rar A)
               < \mathcal{L}(A\oplus A'),
\]
which we call the \emph{complete union} of the graphs (because it
corresponds to placing a complete bipartite graph between the
vertex spaces $A$ and $A'$).

Clearly, products, disjoint and complete unions are associative,
and both unions are distributive with respect to the graph product.

\begin{proposition}
  \label{prop:additive}
  Both $\vartheta$ and $\cvartheta$ are additive under disjoint unions:
  \[
    \vartheta(S \oplus S') = \vartheta(S) + \vartheta(S'),
    \quad
    \cvartheta(S \oplus S') = \cvartheta(S) + \cvartheta(S').
  \]
  Furthermore, $\alpha$ is additive, and $\calpha$ and $\Galpha$ are 
  superadditive under disjoint unions:
  \[
    \alpha(S \oplus S') = \alpha(S) + \alpha(S'),
    \quad
    \calpha(S \oplus S') \geq \calpha(S) + \calpha(S'),
    \quad
    \Galpha(S \oplus S') \geq \Galpha(S) + \Galpha(S').
  \]  
  Finally, all $f \in \{ \alpha, \calpha, \Galpha, \vartheta, \cvartheta \}$ 
  satisfy the following identity:
  \[
    f(S \cplus S') = \max\bigl\{ f(S), f(S') \bigr\}.
  \]
\end{proposition}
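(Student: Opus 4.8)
The plan is to reduce everything to the block structure of two orthogonal complements. Writing operators on $A\oplus A'$ in $2\times2$ block form, one checks that $(S\oplus S')^\perp$ consists of those $T$ whose diagonal blocks lie in $S^\perp$ and $S'^\perp$ while the off-diagonal blocks are free, whereas $(S\cplus S')^\perp=S^\perp\oplus S'^\perp$ is \emph{block-diagonal} (the off-diagonal blocks are forced to vanish, being orthogonal to all of $\mathcal{L}(A\rar A')$ and $\mathcal{L}(A'\rar A)$). I will also use $\1_A\in S$, $\1_{A'}\in S'$, the monotonicity of all five quantities under induced subgraphs (Proposition~\ref{prop:elementary-Watson}, Corollary~\ref{cor:tilde-theta-mono}, Lemma~\ref{lemma:theta-alpha}), and the identity $(S\oplus S')\ox\mathcal{L}(\CC^n)=(S\ox\mathcal{L}(\CC^n))\oplus(S'\ox\mathcal{L}(\CC^n))$ together with its $\cplus$-analogue. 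For $\alpha$ additivity: an independent set for $S\oplus S'$ is an orthonormal family $\ket{\phi_m}=\ket{a_m}\oplus\ket{b_m}$ with $\ketbra{\phi_m}{\phi_{m'}}\in(S\oplus S')^\perp$; pairing with $\1_A\oplus 0$ and $0\oplus\1_{A'}$ forces $\braket{a_m}{a_{m'}}=\braket{b_m}{b_{m'}}=0$ for $m\neq m'$, and the diagonal-block conditions say the nonzero (normalised) $\ket{a_m}$ form an independent set for $S$ and the $\ket{b_m}$ one for $S'$; splitting the index set by which component is nonzero gives $\alpha(S\oplus S')\le\alpha(S)+\alpha(S')$, the reverse being the obvious embedding. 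For $\calpha,\Galpha$ I pass to the channel picture: $S\oplus S'$ comes from the direct-sum channel, whose outputs for inputs supported in $A$ (resp.\ $A'$) land in orthogonal sectors $B\perp B'$. Taking a product shared state and letting the ``type-$1$'' encodings run the $S$-code on the first factor while discarding the second (symmetrically for type-$2$), the two groups of outputs land in $B$ and $B'$ and are orthogonal across groups automatically, and within a group by construction, yielding $\geq\calpha(S)+\calpha(S')$ and $\geq\Galpha(S)+\Galpha(S')$.

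For $\vartheta$ the upper bound is the elementary fact that a positive semidefinite $M=\left(\begin{smallmatrix}P&Q\\Q^\dagger&R\end{smallmatrix}\right)$ obeys $\|M\|\le\|P\|+\|R\|$ (apply $\bra{\psi}M\ket{\psi}\le\bigl(\sqrt{\bra{x}P\ket{x}}+\sqrt{\bra{y}R\ket{y}}\bigr)^2$ to $\ket{\psi}=\ket{x}\oplus\ket{y}$ and Cauchy--Schwarz), combined with the remark that the diagonal blocks of any feasible $\1+T$ are themselves feasible for $\vartheta(S)$ and $\vartheta(S')$. For the matching lower bound I take optimal $T_i$ with top eigenvectors $\ket{u},\ket{v}$ and eigenvalues $\theta_1,\theta_2$, and couple them through the free off-diagonal block $Q=\sqrt{\theta_1\theta_2}\,\ketbra{u}{v}$; a Schur-complement check gives $\1+T\geq0$, while the $2\times2$ compression onto $\ket{u}\oplus 0$ and $0\oplus\ket{v}$ has top eigenvalue exactly $\theta_1+\theta_2$. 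Additivity of $\cvartheta$ follows by tensoring with $\mathcal{L}(\CC^n)$, applying the $\vartheta$-result to the resulting direct sum, and taking the supremum over $n$, using that $n\mapsto\vartheta(S\ox\mathcal{L}(\CC^n))$ is nondecreasing (pad $T$ by $\1$ on the new block) so that $\sup_n\bigl(f(n)+g(n)\bigr)=\sup_n f+\sup_n g$.

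The $\cplus$-identity is immediate for $\alpha,\vartheta,\cvartheta$: since $(S\cplus S')^\perp$ is block-diagonal, every feasible $T$, $\1+T$, or independent set decouples into an $S$-part and an $S'$-part, and the norm (or size) of a block-diagonal object is the maximum of the two parts; moreover $f(S\cplus S')\geq\max\{f(S),f(S')\}$ holds for all five $f$ because $S$ and $S'$ are induced subgraphs of $S\cplus S'$. The genuine obstacle is the reverse inequality $\calpha(S\cplus S'),\,\Galpha(S\cplus S')\le\max\{\calpha(S),\calpha(S')\}$ (resp.), for here the assisted encodings might a priori exploit coherence between the two sectors.

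To attack this I reformulate an assisted independent set as unitaries $U_m$ and a state $\rho$ with $U_m\rho U_{m'}^\dagger\in(S\cplus S')^\perp\ox\mathcal{L}(R)$; block-diagonality of $(S\cplus S')^\perp$ forces the off-diagonal blocks $\xi_m\eta_{m'}^\dagger$ of $U_m\rho U_{m'}^\dagger$ to vanish for $m\neq m'$, where $\xi_m,\eta_m$ are the $A$- and $A'$-parts of $U_m\sqrt\rho$. Setting $\mathcal{X}_m=\operatorname{range}\xi_m^\dagger$ and $\mathcal{Y}_m=\operatorname{range}\eta_m^\dagger$ inside $\operatorname{supp}\rho$, one has $\mathcal{X}_m+\mathcal{Y}_m=\operatorname{supp}\rho$ and $\mathcal{X}_m\perp\mathcal{Y}_{m'}$ for $m\neq m'$; a short dimension count then forces, for $N\geq3$, a common $\mathcal{X}$ with $\mathcal{Y}=\mathcal{X}^\perp$, whence $\rho$ block-diagonalises as $\rho_{\mathcal X}\oplus\rho_{\mathcal Y}$ and, on whichever block is nonzero, the corresponding parts of $U_m\sqrt\rho$ constitute an assisted independent set of the same size $N$ for $S$ (resp.\ $S'$), so $N\le\max\{\calpha(S),\calpha(S')\}$. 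I expect the delicate points to be precisely: (i) the $\Galpha$ version, where $W_m$ is only invertible and the clean identity $\rho=\xi_m^\dagger\xi_m+\eta_m^\dagger\eta_m$ fails, so the extraction of a generalised assisted set for $S$ must be routed through the positive-definite element of $S\cplus S'$ and the congruence invariance~(\ref{eq:congruence}); and (ii) the low-multiplicity cases $N\le2$, which escape the dimension argument and must be checked by hand.
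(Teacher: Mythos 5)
Your proof is correct and follows essentially the same route as the paper: the block-matrix norm identity for the $\oplus$-additivity of $\vartheta$ (with $\cvartheta$ obtained by tensoring with $\mathcal{L}(\CC^n)$ and taking suprema), the decomposition of an independent set according to whether $P\ket{\phi_m}$ or $P'\ket{\phi_m}$ vanishes for $\alpha$, and the block-diagonality $(S\cplus S')^\perp = S^\perp\oplus {S'}^\perp$ for the $\max$-identities. The one place you go beyond the paper is the $\cplus$-identity for $\calpha$ and $\Galpha$, which the paper dismisses as ``almost immediate'': your range/dimension-count argument forcing a common $\mathcal{X}$ for $N\ge 3$, together with the separate handling of $N\le 2$ (using that any non-complete graph already has $\calpha\ge 2$), is a correct and genuinely needed elaboration of that step.
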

\begin{proof}
  We only need to show the first claim for $\vartheta$. By eq.~(\ref{eq:theta-S}),
  \[
    \vartheta(S \oplus S') = \max \left\{ \left\| \left[ \begin{array}{rr}
                                                       \1+T      & M \\
                                                       M^\dagger & \1+T'
                                                     \end{array} \right] \right\| : 
                                                     T \in S^\perp,\ T' \in {S'}^\perp \right\},
  \]
  where the maximum is restricted to positive semidefinite block matrices.
  It is an easy observation that for all positive semidefinite $L_1$, $L_2$,
  \[
    \max_M \left\{ \left\| \left[ \begin{array}{ll}
                                    L       & M \\
                                    M^\dagger & L'
                                  \end{array} \right] \right\|
                                : \left[ \begin{array}{ll}
                                    L       & M \\
                                    M^\dagger & L'
                                  \end{array} \right] \geq 0 \right\} 
        = \| L \| + \| L' \|,
  \]
  from which the assertion follows.
  
  The independence numbers are clearly superadditive, so it is left to show
  that $\alpha(S \oplus S') \leq \alpha(S) + \alpha(S')$.
  For this, let $\{ \ket{\phi_m}: m=1,\ldots,N \}$ be an independent set of
  $S \oplus S'$, so that for all $m \neq m'$,
  \[
    (S \oplus S')^\perp \ni \ketbra{\phi_m}{\phi_{m'}}
                          = (P\oplus P') \ketbra{\phi_m}{\phi_{m'}} (P\oplus P'),
  \]
  which is equivalent to
  \[
    P\ketbra{\phi_m}{\phi_{m'}}P \in S^\perp
    \quad\text{and}\quad
    P'\ketbra{\phi_m}{\phi_{m'}}P' \in {S'}^\perp.
  \]
  Thus, up to normalisation, the set
  $\cA = \{ m : P\ket{\phi_m} \neq 0 \}$ gives rise to an independent
  set in $S$, and likewise $\cB = \{ m : P'\ket{\phi_m} \neq 0 \}$ for $S'$.
  Because each $m$ is in at least one of $\cA$ or $\cB$, the
  claim follows.
   
  Finally, the $\boxplus$-$\max$-identities follow almost immediately from
  $(S \boxplus S')^\perp = S^\perp \oplus {S'}^\perp$.
\end{proof}

\bigskip
Another easy notion is the \emph{distance-$\leq\!\! t$-graph} of an $E$-graph
$S < \mathcal{L}(A)$: this is the subspace $S^t = S\cdot S \cdots S$
(the t-fold product), which is indeed an $E$-graph. By convention,
here $S^0 := S_0$.

\medskip
It may happen that the same $S$ is an $E$-graph (with subalgebra $S_0$)
and an $F$-graph (with subalgebra $S_1 > S_0$), such that $E$ factors
through $F$, i.e.~there is a conditional expectation $G:S_1 \rar S_0$
such that $E = G \circ F$. We call then the $F$-graph $S_1 < S$
a \emph{refinement} of the $E$-graph $S_0 < S$. (The idea being
that with $F$ and $S_1$, the graph has more vertices.) Conversely,
by concatenating the conditional expectation $E$ with another
one $E':S_0 \rar S_1 < S_0$, we can obtain \emph{coarse grainings}
of any $E$-graph as $E'\circ E$-graphs.

The notions of subgraph and induced subgraph are more subtle, because
we have to take care of the conditional expectation. The simplest is
when $S'$ is a \emph{proper subgraph} of an $E$-graph $S$, which means
that $S_0 < S' < S$ and that $S'$ is a sub-Hilbert-$S_0$-module of $S$
with the same inner product: $S_0 S' = S' S_0 = S'$. We call proper
subgraphs also \emph{$E$-subgraphs}.
Less strict, we call an $E'$-graph $S'$ with subalgebra $S_0'$ a 
(generally: improper) \emph{subgraph} of the $E$-graph $S < \mathcal{L}(A)$
if $S' < S$ and $S_0' < S_0$, and $E'|_{S'} = E|_{S'}$

Induced subgraphs of an $E$-graph $S < \mathcal{L}(A)$ (with algebra $S_0$)
are defined with respect to a subspace $A' < A$ with projector $P$: 
the $E'$-graph $S' := PSP < \mathcal{L}(A')$ (with algebra $S_0'$)
is called \emph{proper induced subgraph} if $PS_0P=S_0'$ and the
restriction to $A'$ commutes with the conditional expectations:
$E'(PXP) = P E(X) P$ for all $X\in \mathcal{L}(A)$.
Again, there is a less strict notion of \emph{induced subgraph},
which only demands $PS_0P < S_0'$ and that $S_0' < S'$ is a refinement
of $PS_0P < S'$.

\medskip
To illustrate these notions, we note that the Stinespring dilation theorem
implies that every $E$-graph $S_0 < S < \mathcal{L}(A)$ is a
proper induced subgraph of the strong product between a complete
$F$-graph $\mathcal{L}(C)$ an empty $\rho$-graph $\CC\1 < \mathcal{L}(B)$.
We can also re-interpret the independence numbers of a
non-commutative graph $S$ as the largest 
dimensions of (improper) induced subgraphs: an induced
empty $\rho$-graph $\CC\1 < \mathcal{L}(A')$ for $\alpha_q(S)$
[and such $A'$ we should hence call a \emph{quantum independent set}],
and an induced $\operatorname{diag}$-graph 
$\operatorname{span}\{ \proj{m} : m=1,\ldots,|A'| \} < \mathcal{L}(A')$
for $\alpha(S)$ [and such $A'$ we should call an \emph{independent set}].
Cliques are defined analogously.

Going back to eq.~(\ref{eq:S-structure}), recall that 
a non-commutative graph $S$ has the form
\[
  \bigoplus_{j=1}^r \mathcal{L}(A_j)\ox\1_{Z_j} = S_0 
           < S = \bigoplus_{j,k=1}^r \mathcal{L}(A_k\rar A_j) \ox S_{jk}.
\]
From this we can construct the graph $\widetilde{S}_0$,
\[
  \bigoplus_{j=1}^r \CC\1_{Z_j} =: \widetilde{S}_0 
           < \widetilde{S} := \bigoplus_{j,k=1}^r S_{jk},
\]
which is an induced subgraph of $S$ over a \emph{commutative}
diagonal $\widetilde{S}_0$. In addition, $S$ itself is an induced subgraph
of $\widetilde{S}\ox\mathcal{L}(R)$ for large enough $|R|$. One
can think of $S$ as obtained from $\widetilde{S}$ by ``blowing
up the vertices'': each vertex becomes a complete graph $K_{|A_j|}$,
and each edge a complete bipartite graph $K_{|A_k|,|A_j|}$.
Because of these relations, $S$ and $\widetilde{S}$ share the
values of $\alpha$, $\calpha$. $\Galpha$ and $\cvartheta$ (though
not of $\vartheta$).

\bigskip
As yet, we do not have many illuminating examples of non-commutative
graphs, nor can we offer applications to classical graph theory.
Instead, we close with highlighting several questions motivated
by the above definitions.

\begin{itemize}
  \item Algorithmic consequences: Non-commutative graph
  isomorphism is at least as hard as classical graph isomorphism,
  but are they of the same order? Similarly, graph non-isomorphism
  has efficient interactive proofs, does this extend to non-commutative
  graphs? Finally, induced substructures such as independent sets
  are NP-complete for classical graphs, and QMA-complete for non-commutative
  graphs (again for independent sets) -- but is it still QMA-complete
  for quantum independent sets? Or for entanglement-assisted
  independent sets? An interesting question in particular is, whether
  one can put a priori bounds on the dimension of the entangled state referred
  to in the definitions for $\calpha$ and $\Galpha$.

  \item For classical graphs on $n$ vertices, the largest known ratio between
  independence number and Lov\'{a}sz function occurs for random
  graphs and is $\Omega(\sqrt{n}/{\log n})$, which is conjectured
  to be maximal. What is the largest value of $\cvartheta(S)/\calpha(S)$
  for non-commutative graphs? (Our example $S=\Delta^\perp$
  in section~\ref{sec:end} shows a lower bound of $|A|/2$.)
  
  \item Random graphs are a powerful tool in combinatorics; what would
  be the natural non-commutative random graphs? The simplest one
  can think of is to fix the dimension $D$ of a subspace $S=S^\dagger < \mathcal{L}(\CC^n)$
  containing $\1$, and to choose it uniformly at random according to
  the Haar-induced measure on the Grassmannian
  (very much like what is done in~\cite{CCH-0}). What are the expected
  values of clique and independence numbers, and of 
  our $\cvartheta$ as functions of $n$ and $D$?

  \item The bipartite graphs $Z < \mathcal{L}(A\rar B)$ play a central
  role in the zero-error capacity of classical channels assisted by feedback
  or non-signalling correlation, as we have mentioned. Does this extend to
  quantum channels in the appropriate sense? For this, one first has
  to confirm that the classical noiseless feedback-assisted
  zero-error capacity, $C_{0F}(\mathcal{N})$, can be expressed in terms
  of $Z$ alone. This is indeed possible, even when the feedback is
  allowed to be an arbitrary quantum message after each channel use.
  We are currently exploring fractional packing/covering numbers for
  non-commutative bipartite graphs, with the motivation of extending
  Shannon's zero-error capacity theory to quantum channels with
  feedback.

  \item There are many other graph notions we didn't generalise yet:
  Perhaps the most interesting ones are
  chromatic number and perfectness of a graph. 
  Is there a Laplacian operator with distinguished properties
  in each non-commutative graph?
  Finally, is there a good notion of edge contraction which would 
  lead to a theory of graph minors?
  
\end{itemize}

\begin{remark}
A final comment on the definition $\cvartheta(S)$:
There, it would seem more natural to consider the subspace
orthogonal with respect to the Hilbert-module inner product
$\langle \cdot,\cdot \rangle _E$ (in particular excluding the
entire diagonal $S_0$ from $S^{(\perp_E)}$). This highlights
the dependence of the notion of orthogonality on
the inner product chosen. In our definition of the independence
numbers -- and then again when we defined $\vartheta$, $\cvartheta$ -- 
we relied on the underlying Hilbert space structure,
which led us to consider the Hilbert-Schmidt inner product on $\mathcal{L}(A)$,
and more generally the conditional expectations with tracial
$\zeta_j$.
This seems to suggest that there are privileged conditional
expectations to define the Hilbert module. We leave an
investigation of this issue to future explorations of 
non-commutative graphs.
\end{remark}


\acknowledgments
It is a pleasure to thank many people for discussions and feedback
on the present work, including Toby Cubitt, Debbie Leung, Will Matthews, 
Ashley Montanaro, Tomasz Paterek, Marcin Paw\l{}owski, 
and Aram Harrow.

While completing this paper, we learned
of a direct proof by Salman Beigi~\cite{salman:theta} that the
entanglement-assisted independence number of a classical channel
(and hence a classical graph) is bounded by Lov\'{a}sz' $\vartheta$
(Corollary~\ref{cor:C0-E-theta}).
We are grateful to him for sharing his manuscript with us prior to
publication.

RD is partly supported by QCIS, University of  Technology, Sydney, 
and the NSF of China (Grant Nos.~60736011 and 60702080).
SS is supported by a Newton International Fellowship.
AW is supported by the European Commission, the U.K. EPSRC, the Royal Society
and a Philip Leverhulme Prize. The Centre for Quantum Technologies is funded by 
the Singapore Ministry of Education and the National Research Foundation as 
part of the Research Centres of Excellence programme.

\end{document}